\theoremstyle{plain}
\newtheorem{thm}{Theorem}[section]
\newtheorem{lem}[thm]{Lemma}
\theoremstyle{definition}
\newtheorem{defn}{Definition}[section]
\theoremstyle{remark}
\def\maxwidth{ %
  \ifdim\Gin@nat@width>\linewidth
    \linewidth
  \else
    \Gin@nat@width
  \fi
}
\definecolor{fgcolor}{rgb}{0.345, 0.345, 0.345}
\definecolor{shadecolor}{rgb}{.97, .97, .97}
\definecolor{messagecolor}{rgb}{0, 0, 0}
\definecolor{warningcolor}{rgb}{1, 0, 1}
\definecolor{errorcolor}{rgb}{1, 0, 0}
\title{A Fundamental Measure of Treatment Effect Heterogeneity}
\author{Jonathan Levy, Mark van der Laan, Alan Hubbard, Romain Pirracchio}
\begin{document}

\begin{titlepage}

\maketitle
\begin{abstract}
We offer a non-parametric plug-in estimator for an important measure of treatment effect variability and provide minimum conditions under which the estimator is asymptotically efficient.  The stratum specific treatment effect function  or so-called blip function, is the average treatment effect for a randomly drawn stratum of confounders.  The mean of the blip function is the average treatment effect (ATE), whereas the variance of the blip function (VTE), the main subject of this paper, measures overall clinical effect heterogeneity, perhaps providing a strong impetus to refine treatment based on the confounders.  VTE is also an important measure for assessing reliability of the treatment for an individual.  The CV-TMLE \parencite{Zheng:2010aa} provides simultaneous plug-in estimates and inference for both ATE and VTE, guaranteeing asymptotic efficiency under one less condition than for TMLE \parencite{Laan:2006aa, Laan:2011aa}.  This condition is difficult to guarantee a priori, particularly when using highly adaptive machine learning that we need to employ in order to eliminate bias.  Even in defiance of this condition, CV-TMLE sampling distributions maintain normality, not guaranteed for TMLE, and have a lower mean squared error than their TMLE counterparts.  In addition to verifying the theoretical properties of TMLE and CV-TMLE through simulations, we point out some of the challenges in estimating VTE, which lacks double robustness and might be unavoidably biased if the true VTE is small and sample size insufficient.  We will provide an application of the estimator on a data set for treatment of acute trauma patients.  
\end{abstract}
\end{titlepage}

\newpage
\section{Introduction}
A clinician might observe highly variable results for a treatment and want to know how much of this variation is due to confounders, thus motivating more precision in how treatment is assigned.  Such variation also provides a measure of what to expect from treatment on an individual level in beyond the average treatment effect (ATE).   The stratum specific treatment effect function or so-called blip function, is defined as the average treatment effect for a randomly drawn stratum of confounders.  We employ targeted learning \parencite{Laan:2011aa} to construct simultaneous plug-in estimators of ATE and VTE, amounting to the sample mean and variance of blip function estimates.  Under TMLE conditions, standard error estimates obtained by computing the standard deviation of the efficient influence curve approximation are asymptotically as small as any regular asymptotically linear estimator and nominally cover the truth.  Without the targeting step in our estimator (see section 2.3), employing the non-parametric bootstrap might not guarantee valid inference \parencite{Vaart:1996aa} for our plug-in estimates if we wish to employ a rich ensemble of machine learning algorithms in our prediction of the outcome model and possibly the treatment mechanism.  In the case of our prediction methods being costly, we also save considerable time in the process.  \\

We employ ensemble machine learning so as to break from narrow parametric model assumptions that do not respect real knowledge of the statistical model.  In doing so, we will see the CV-TMLE has an advantage over the TMLE in that it does not require a donsker condition on the initial predictions, enabling more flexibility in the ensemble learning we employ.  Estimating VTE lacks the desirable robustness properties present when estimating ATE.  Particularly, for a randomized trial, our CV-TMLE estimate for ATE is consistent where as knowledge of the treatment mechanism does not guarantee consistent VTE estimates.  The lack of robustness is due to a stubborn second order remainder term, which we discuss at length.  Despite the appeal of targeted learning as in this paper, the second order remainder can make coverage unreliable, an issue for which we offer future improvements.  We also have limitations in detecting VTE when the true VTE is relatively small for the sample size. \\

\subsection{background}
Let us define $Y_{a}$ as a random variable which, is a counterfactual outcome drawn from a population under the intervention to set treatment to $a$ as per the Neyman-Rubin potential outcomes framework \parencite{neyman1923, rubin1974}.  To obtain the variance in counterfactual treatment effect from individual to individual, one would seek $var(Y_{1}-Y_{0})$.  However, this parameter is not generally identifiable, a problem addressed by Neyman, 1923, when computing the standard errors for the counterfactual average difference in yield of two crop varietals.  Neyman realized his standard errors relied upon the correlation between counterfactual outcomes on the same plot and thus had not the data to estimate it. \\

Fisher, 1951, similarly to Neyman, realized he had difficulty directly estimating the joint distribution of $Y_1$ and $Y_0$ and suggested pairing a treated and untreated varietal in the same pot might closely resemble a set of counterfactuals differences, $Y_1 - Y_0$, from which one might apply a t-test.  Cox, 1958, facing the same issue, assumes $var(Y_1 - Y_0)=0$ for predefined homogeneous subgroups, which is very difficult to verify.  Strong assumptions must be made to identify $var(Y_{1}-Y_{0})$ in the event it is not 0, such as assuming quantiles are preserved between the counterfactual outcomes \parencite{heckman}.  Heckman, 1997 mentions combining the results of Cambanis, 1976 and Frechet, 1951 to tail bound $var(Y_{1}-Y_{0})=var(Y_{1}) + var(Y_{0}) - 2\gamma_{Y_1, Y_0} var(Y_{1})var(Y_{0})$ and use the bootstrap to test if the lower bound of the confidence interval includes a variance of 0.  For randomized trial data, Ding, Feller et al., 2016, construct a Fisher randomization test of the null hypothesis that $var(Y_1-Y_0)=0$ under the untestable assumption that there exists a universal $\tau$ so that $Y_1=Y_0+\tau$.  However, a null hypothesis of 0 is not helpful when it comes to assigning treatment based on confounders. \nocite{cox1958, frechet1951, cambanis, fisher1951, neyman1923}\\

The assumption of Cox, 1958, amounts to assuming $var(Y_1 - Y_0) = var\left(\mathbb{E}[(Y_1-Y_0)\vert W]\right)$, where $W$ is an a priori known homogeneous subgroup.  Thus he assumes all variation of $Y_1 - Y_0$ is due to confounders, the exact variation we aim to capture by estimating $var\left(\mathbb{E}[(Y_1-Y_0)\vert W]\right)$, the variance of the blip function or VTE.   In our case we need not consider a priori any homogeneous subgroups.  As a simple case to give intuition as to what VTE captures, consider $W = $ indicator of male or female, and binary outcome indicating survival if the outcome is 1. Suppose the men have a blip value of $\mathbb{E}[(Y_1-Y_0)\vert W = male] = -0.3$ and the females, $\mathbb{E}[(Y_1-Y_0)\vert W = female] = 0.7$.  Assuming men and women are of equal proportion for the population at hand, then the VTE is 0.25 and ATE is 0.2.  This would mean the patient gains from treatment an average 20\% with a standard deviation of 50\%.  One should be reminded that the VTE gives a more personal measure of what to expect from treatment, but not an individual effect variance.  For instance, within the male subgroup one might have a high or low varying random variable, $(Y_1 - Y_0 \mid male)$ and such does not count toward the VTE.  Hence a clinician's perception of highly varying outcomes does not mean the VTE is high.  Rather one would want to estimate VTE to see if the varying outcomes were due to lack of precision in applying the treatment. \\

Estimating VTE with a plug-in estimator naturally depends on using an estimate of the outcome model, $\mathbb{E}[Y_a \vert W]$,  from which to estimate the blip function, $\mathbb{E}[Y_1 \vert W]- \mathbb{E}[Y_0 \vert W]$, and then its variance over an estimate of the distribution of $W$, in our case the empirical distribution of $W$.  If one knows the blip function, for instance, one would know an optimal dynamic rule for treatment \parencite{Luedtke:2014aa}.  One could also find subgroup specific treatment effects via the blip function.  Lu et al \parencite{Lu-Tian:2014aa} offered a way to isolate interactions of treatment with confounders in a randomized trial by transforming the predictors of a parametric model. The main idea is to form a variable, $z=2A-1$, where $A$ is the usual treatment indicator, and then put the interaction of this variable with the predictors in the outcome regression.  This enables direct estimation of the blip function from which one could obtain a point estimate of the VTE.  One could also employ recursive partitioning to divide the data into homogeneous subgroups as far as treatment effects \parencite{Athey:2016aa} as well as employ random forests \parencite{Athey:2015aa}.  We could use such subgroups to compute the VTE but as noted in \parencite{Marianne-Bitler:2014aa}, establishing too rough subgroups can miss detecting treatment effect heterogeneity.  In applying the CV-TMLE or TMLE, we also may use tree regression methods within our machine learning ensemble but we are only interested in the predictive power of these methods in eliminating second order remainder term bias, as we will discuss.    

\section{Methodology}
For simplicity we consider only binary treatment but all the analysis here-in can be extended to multinomial treatment if we define the treatment effect to be a contrast of any two of the treatment levels.  The reader may assume the outcome is either binary or a bounded continuous outcome.  For the latter, we can scale the outcome to be in $[0,1]$ via the transformation $Y_s = \frac{Y - a}{b-a}$ where $a$ and $b$ are minimum and maximum outcomes respectively, obtained from the data or known a priori.  The distribution $P$ in our model defines an outcome model with conditional mean, $\bar{Q}(A,W) = E_P[Y \mid A, W]$, a treatment mechanism, $g(A \mid W) = Pr[A\mid W]$ and $p_W$, the density of $W$.  For binary outcome, the density of $P$ can be factored $p(w,a,y) = \bar{Q}(a,w)^y(1-\bar{Q}(a,w))^{1-y}g(a \mid w)^a(1- g(a \mid w)^{1-a}p_W(w)$ and the mean of the log-likelihood loss is \begin{scriptsize}$\int L(P)(w,a,y)dP = \int \left[ -\left(ylog(\bar{Q}(a,w))+(1-y)log(1 - \bar{Q}(a,w))\right) - log\left(g(a \mid w)^a(1- g(a \mid w)^{1-a}p_W(w)\right)\right] dP$.\end{scriptsize} For continuous outcome scaled to be in [0,1], such is the mean of the so-called quasibinomial loss, also minimized at the true distribution, $P_0$ \parencite{Wedderburn, McCullagh}. The benefit of scaling a continuous outcome to $[0,1]$ and then  applying quasibinomial loss is the predictions are constrained to be within 0 and 1 when applying logistic regression \parencite{boundedoc}.  Thus, when scaled back, estimates will be within $a$ and $b$.  We scale our continuous outcomes to be between 0 and 1 and use quasibinomial loss, making the targeting portion of the CV-TMLE or TMLE procedure (section 2.3) identical for bounded continuous or binary outcomes.  After the CV-TMLE or TMLE algorithm is complete, one may convert the outcomes back to their original scale and form estimates and confidence bands, for which we offer instruction.  \\
 
\subsection{Full Data Statistical Model and the link to the Observed Data}
Our full data, including unobserved measures, is assumed to be generated according to the following structural equations \parencite{Wright, Strotz, Pearl:2000aa}.  We can assume a joint distribution, $U=(U_{W},U_{A},U_{Y})\sim P_{U}$, an unknown distribution of unmeasured variables. $X = (W,A,Y)$ are the measured variables.  In the time ordering of occurrence we have $W=f_{W}(U_{W})$ where $W$ is a vector of confounders, $A=f_{A}(U_{A},W)$, where $A$ is a binary treatment and $Y=f_{Y}(U_{Y},W,A)$, where $Y$ is the outcome, either binary or bounded continuous.  We thusly define a distribution $P_{U,X}$, via $(U,X) \sim P_{U,X}$.\\

$Y_{a}$ is a random outcome under $P_{U,X}$ where we intervene on the structural equations to set treatment to $a$, i.e. $Y_{a}=f_{Y}(U_{Y},a,W)$. The full model, $\mathcal{M}^{F}$, consists of all possible $P_{UX}$. The observed data model, $\mathcal{M}$, is linked to $\mathcal{M}^{F}$ in that we observe $X=(W,A,Y)\sim P$ where $X=(W,A,Y)$ is generated by $P_{UX}$ according to the structural equations above.  Our true observed data distribution, $P_{0}$, is an element of $\mathcal{M}$, which will be non-parametric.  In the case of a randomized trial or if we have some knowledge of the treatment mechanism, $\mathcal{M}$ is considered a semi-parametric model and we will incorporate such knowledge, which we will see is much more helpful for estimating ATE than for VTE.  

\subsubsection{Parameter of Interest and Identification}
We define the stratum-specific treatment effect function or blip function as $b_{P_{UX}}(W)  =  \mathbb{E}_{P_{UX}}[Y_{1}\vert W]-\mathbb{E}_{P_{UX}}[Y_{0}\vert W]$.  Our parameter of interest is a mapping from $\mathcal{M}^{F}$ to $R^{2}$ defined by 

$\Psi^{F}(P_{UX})=(\mathbb{E}_{P_{UX}}b_{P_{UX}}(W), var_{P_{UX}}b_{P_{UX}}(W))$.\\

We will impose the randomization assumption \parencite{Robins1986, Greenland1986}, $Y_{a}\perp A \vert W$ as well as positivity, $0 < E_P[A = a \mid W] < 1$ for all $a$ and $W$.  Defining $b_{P}(W)=\mathbb{E}_{P}[Y\vert A=1,W]-\mathbb{E}_{P}[Y\vert A=0,W]$ yields $b_{P_{UX}}(W)= b_{P}(W)$ and we can identify the parameter of interest as a mapping from the observed data model, $\mathcal{M}$, to $\mathbb{R}^{2}$ via the gcomp formula \parencite{Robins1986} $\Psi(P)=(\mathbb{E}_{P}b_{P}(W),var_{P}b_{P}(W))=\Psi(P^{F}_{UX})$, i.e., ATE and VTE respectively. 

\subsection{TMLE Conditions and Asymptotic Efficiency}
 We refer the reader to Targeted Learning Appendix \parencite{Laan:2011aa} as well as \parencite{Laan:2015aa,Laan:2015ab, Laan:2006aa} for a more detailed look at the theory of TMLE and the use of targeted learning that yields our algorithm below.  We offer the reader a brief overview in service of our estimation problem at hand.\\

The efficient influence curve at a distribution, $P$, for the parameter mapping, $\Psi$, is a function of the observed data, $O\sim P$, notated as $D^\star_{\Psi}(P)(O)$.  Its variance gives the generalized Cramer-Rao lower bound for the variance of any regular asymptotically linear estimator of $\Psi$ \parencite{Vaart:2000aa}.  We also note, in our general discussion, we consider our case of a two dimensional efficient influence curve, $D^{\star}_{\Psi}(P)=(D^{\star}_{\Psi_1}(P),D^{\star}_{\Psi_2}(P))$, where $\Psi_1$ and $\Psi_2$ are ATE and VTE respectively.  One may generalize to any finite dimension from our discussion. \\  

We will employ the notation, $P_{n}f(O)$, to be the empirical average of function, $f(\cdot)$, and $Pf(O)$ to be $\mathbb{E}_{P}f(O)$.  Define a loss function, $L(P)(O)$, which is a function of the observed data, O, and indexed at the distribution on which it is defined, $P$, such that $E_{P_0} L(P)(O)$ is minimized at the true observed data distribution, $P=P_0$. The TMLE procedure maps an initial estimate, $P_{n}^{0}\in \mathcal{M}$, of the true data generating distribution to $P_{n}^{\star}\in \mathcal{M}$ such that $P_{n}L(P_{n}^{\star})\leq P_{n}L(P_{n}^{0})$ and such that $P_{n}D^{\star}(P_{n}^{\star})=0_{2\times1}$. $P_{n}^{\star}$ is called the TMLE of the initial estimate $P_{n}^{0}$.  We can then write an expansion with second order remainder term, $R_2$, as follows: $\Psi(P_{n}^{\star})-\Psi(P_{0})=(P_{n}-P_{0})D^{\star}(P_{n}^{\star})+R_{2}(P_{n}^{\star},P_{0})$. 

\subsubsection{Conditions for Asymptotic Efficiency}
Define the norm $\Vert f \Vert_{L^{2}(P)} = \sqrt{\mathbb{E}_{P}f^{2}}$. Assume the following TMLE conditions:

\begin{enumerate}
\item
$D^{\star}_{\Psi_j}(P_{n}^{\star})$ is in a P-Donsker class for all $j$. This condition can be dropped in the case of using CV-TMLE \parencite{Zheng:2010aa}. We show the advantages to CV-TMLE in our simulations.  

\item
Second order remainder condition: $R_{2,j}(P_n^*,P_0)$ is $o_{p}(1/\sqrt{n})$ for all $j$.
\item
$D^{\star}_{\Psi_j}(P_{n}^{\star})\overset{L^{2}(P_{0})}{\longrightarrow} D^{\star}_{\Psi_j}(P_{0})$ for all $j$. 

\end{enumerate}

then $\sqrt n(\Psi(P_{n}^{\star})-\Psi(P_{0})) \overset{D}{\implies} N[0_{2\times1}, cov_{P_0}(D^{\star}_{\Psi}(P_{0})_{2\times2}]$ where 
$cov_{P_0}(D^{\star}_{\Psi}(P_{0})(O)$ is a $2\times2$ matrix in our case with the $(i,j)$ entry given as $E_{P_0} D^*_{\Psi_i}(P_0)(O)D^*_{\Psi_j}(P_0)(O)$.  The $i^{th}$ diagonal of $cov_{P_0}(D^{\star}_{\Psi}(P_{0})(O)$ is the variance of the $D^*_{\Psi_i}(P_0)$ and the limiting variance of $\sqrt{n}(\Psi_i(P_n^*) - \Psi_i(P_0))$ under TMLE conditions.  Thus, our plug-in TMLE estimates and CI's given by 

$$\Psi_{j}(P_{n}^{\star})\pm z_{\alpha}*\frac{\widehat{\sigma}_n(D_{j}^{\star}(P_{n}^{\star}))}{\sqrt{n}}$$ 

will be as small as possible for any regular asymptotically linear estimator at significance level, $1-\alpha$, where $Pr(\vert Z \vert \leq z_{\alpha})=\alpha$ for Z standard normal and $\widehat{\sigma}_n(D_{j}^{\star}(P_{n}^{\star}))$ is the sample standard deviation of $\{D_{j}^{\star}(P_{n}^{\star})(O_i) \mid i \in 1:n \}$ \parencite{Laan:2006aa}.  Note, that if the TMLE conditions hold for the initial estimate, $P_n^0$, then they will also hold for the updated model, $P_n^{\star}$ \parencite{Laan:2015aa}, thereby placing importance on our ensemble machine learning in constructing $P_n^0$.

\subsubsection{The Unforgiving Remainder Term in VTE Estimation}
Computation of the remainder term is in the Appendix A and is accompanied by more rigorous analysis.  Here we provide the reader with the necessary results for our discussion.  For convenience we define the true outcome model to be $\bar{Q}_0(A,W) = E_{P_0}[Y \mid A, W]$ and the true treatment mechanism as $g_0(A \mid W) = E_{P_0}[A \mid W]$.  Let $\bar{Q}_n^0$ be the initial estimate of $\bar{Q}_0$, and $g_n$ be the estimate for $g_0$.  For estimating ATE and VTE, we will fluctuate an initial outcome model fit, $\bar{Q}_n^0$ to $\bar{Q}_n^*$ but $g_n$ will not change.   The second order remainder term for VTE is:

\noindent {\scriptsize{}
\begin{eqnarray}
&  & R_{2}(P_n^*,P_{0}) = \Psi(P)-\Psi(P_{0}+P_{0}\left(D^{\star}(P)\right)\\ 
 & = & \left(\mathbb{E}_{0}b_{0}(W)-\mathbb{E}b_n^*(W)\right)^{2}\\ 
 &  + & \mathbb{E}_{0}\left[2\left(b_n^*(W)-\mathbb{E}b(W)\right)\left(\frac{g_{0}(1\vert W)-g(1\vert W)}{g(1\vert W)}\left(\bar{Q}_{0}(1,W)-\bar{Q}_n^*(1,W)\right)-\frac{g_{0}(0\vert W)-g(0\vert W)}{g(0\vert W)}\left(\bar{Q}_{0}(0,W)-\bar{Q}_n^*(0,W)\right)\right)\right]\\ 
 & - & \mathbb{E}_{0}\left(b_{0}(W)-b_n^*(W)\right)^{2}
\end{eqnarray}
}{\scriptsize \par}

where $b_n^* = \bar{Q}_n^*(1,W) - \bar{Q}_n^*(0,W)$.  Considering (2) and (3) above, we need
\begin{equation}
\Vert\bar{Q}_{n}^{*}-\bar{Q}_{0}\Vert_{L^{2}(P_{0})}\Vert g_{n}-g_{0}\Vert_{L^{2}(P_{0})}
\end{equation}
to be $o_P(n^{-0.5})$.  If the first factor is $o_P(n^{r_{\bar{Q}}})$ and the second is $o_P(n^{r_g})$, then $r_{\bar{Q}} + r_g \leq -0.5$ will satisfy the TMLE remainder term condition 2 of section 2.2.1.   It is notable the terms disappear in the case of a randomized trial where we incorporate the known $g_0$.  (5) is also a generous upper bound for the first two terms, which depend on $\int (\bar{Q}_{n}^{*}-\bar{Q}_{0})(g_{n}-g_{0}))dP_0$ because the integrand can change sign. However, (4) is not generous in this way because the integrand is a square.  Precisely, we require $\Vert \bar{Q}_n^*-\bar{Q}_0\Vert_{L^2(P_0)}$ to be $o_P(n^{-0.25})$ with no help provided by knowing the treatment mechanism.  Hence, VTE estimation is not doubly robust.  We can apply a large data adaptive ensemble of state-of-the-art machine learning algorithms to mitigate this remainder term but we still have found it can cause bias and poor coverage.   

\subsection{One-step CV-TMLE Algorithm for ATE and VTE}
The one-step TMLE algorithm constructs a parametric submodel through the initial estimate, $P_n^0$, indexed by a single dimensional parameter, $\epsilon$: $\{P_{n,\epsilon} \mid \epsilon \in [-\delta, \delta] \}$ where $P_{n,\epsilon = 0} = P_n^0$.  The construction is performed recursively in such a way that we arrive in one step at an element of the submodel, $P_n^*$, which has minimum empirical average loss of all the elements and also solves the efficient influence curve equation, $P_n D^*_\Psi (P_n^*)(O) = 0_{2\times1}$.  We introduce in Appendix C a new iterative analog to the one-step TMLE procedure which also uses one-dimensional parametric submodels, called canonical least favorable submodels \parencite{clfm}, from which one can define the universal least favorable submodel employed in the one-step TMLE algorithm. We also mention the iterative TMLE in van der Laan and Gruber, 2016, that utilizes parametric submodels of dimension the same dimension as the parameter.  It has been conjectured that the one-step TMLE may better preserve the properties of the initial fit, $P_n^0$, than the aforementioned iterative versions, thereby leading to better finite-sample behavior of the second-order remainder term $R_2(P_n^*,P_0)$ \parencite{Laan:2015ab}. If this conjecture is correct, then we would expect similar gains by using the one-step TMLE in our setting, however, in our simulations we found no appreciable differences.  Hence, we will only discuss the one-step TMLE and one-step CV-TMLE algorithms, leaving the the difference in performance between TMLE procedures as a subject for future research. \\ 

The efficient influence curve for $\Psi(P) = (\Psi_1(P), \Psi_2(P))$, i.e. ATE and VTE,  has two components given by
\begin{eqnarray*}
D^{\star}_{\Psi_1}(P)(W,A,Y) &=& \frac{2A-1}{g(A\vert W)}(Y-\bar{Q}(A,W))+b_{P}(W)-\Psi_{1}(P)\\
D^{\star}_{\Psi_2}(P)(W,A,Y) &=& 2(b_{P}(W)-\mathbb{E}_{P}b_{P})\frac{2A-1}{g(A\vert W)}(Y-\bar{Q}(A,W))+(b_{P}(W)-\mathbb{E}_{P}b_{P})^{2}-\Psi_{2}(P) 
\end{eqnarray*}
where $W$ is a possibly high dimensional set of confounders, $A$ is a binary treatment indicator and $Y$ is a binary outcome or a continuous outcome scaled between 0 and 1.  $b_{P}(W) = E_{P}[Y \mid A=1, W] - E_{P}[Y \mid A=0, W]$.  The reader may visit Appendix A for the derivation.\\

In the algorithm below, we adjusted the original CV-TMLE procedure \parencite{Zheng:2010aa} for ease of computation without losing any theoretical properties or finite sample performance.  The convenience here is that once we obtain initial estimates, there is no difference between CV-TMLE and TMLE as far as implementation is concerned.  The reader may consult the Appendix D for the difference between this procedure and the originally defined CV-TMLE \parencite{Zheng:2010aa} regarding our parameter of interest and why neither require the condition 1 in section 2.2.1.  

\subsubsection*{The "Learning" Part of Targeted Learning: Obtaining Initial Estimates}
To perform a one-step CV-TMLE we will first randomly select V folds (usually 10), consisting of V disjoint validation sets of equal size, comprising all $n$ observations and the corresponding training sets.  Each training set is the complement of the corresponding validation set so that for each fold, training set and validation set comprise all $n$ subjects.  For v in 1:V we perform the following: Using the data-adaptive ensemble machine learning package, SuperLearner (Polley, 2009), we fit the true outcome model, $\bar{Q}_0(A,W) = E_{P_0}[Y\mid A,W]$ with $\bar{Q}_{n,v}^0(A, W)$ on the training set and use the fit to make predictions on the validation set.  The V sets of validation set predictions yield one estimate of the outcome prediction for each of the $n$ subjects, denoted $\bar{Q}_n^0(A_i, W_i)$ for the $i^{th}$ subject.  In the case of an observational study, we also might use Superlearner to estimate the treatment mechanism, $E_{P_0}[A \mid W]$, with $g_{n,v}(A \mid W)$ and predict on the validation set.  The V sets of validation set predictions, yield one prediction of the treatment assignment probability for each of the $n$ subjects denoted by $g_n(A_i, W_i)$.  The reader may note that for the one-step TMLE algorithm for ATE and VTE we would have just fit the outcome model and treatment mechanism on the whole data and computed predictions $\bar{Q}_{n}^0$ and $g_{n}$ on that same data using SuperLearner (Polley, 2009).  Such is the fundamental difference between the two procedures in that CV-TMLE only uses predictions on validation sets.   

\subsubsection*{Initialize Targeting Step}
Compute the negative log-likelihood loss for our outcome predictions.  Note, $Y_i$ is the true outcome:

\[
P_{n}L(P_{n}^{0})=-\frac{1}{n}\sum_{i=1}^{n}\left[Y_{i}\text{log}\bar{Q}_{n}^{0}(A_{i},W_{i})+(1-Y_{i})\text{log}(1-\bar{Q}_{n}^{0}(A_{i},W_{i}))\right]=L_{0}\text{ our starting loss}
\]

Compute $H_{1}^{0}(A_i,W_i)=\frac{2A_i-1}{g_{n}(A_i\vert W_i)}$ and  
$H_{2}^{0}(A_i,W_i)=2\left(b_{n}^{0}(W_i)-\frac{1}{n}\sum_{i=1}^{n}b_{n}^{0}\right)\left(\frac{2A_i-1}{g_{n}(A_i\vert W_i)}\right)$ and note $H_{1}$ will stay fixed for the entire process for this parameter.  Note $b_{n}^{0}(W)=\bar{Q}_{n}^{0}(1,W)-\bar{Q}_{n}^{0}(0,W)$.  Compute $\Vert P_{n}D^{\star}_{\Psi}(P_{n}^{0})(O)\Vert_{2}$, where $\Vert \cdot \Vert_2$ is the euclidean norm. The first component of $\Vert D^*_\Psi(P_{n}^{0})(O)\Vert_{2}$, is $\frac{1}{n}\sum_{i=1}^{n}H_{1}^{0}(A_i,W_i)(Y_i - \bar{Q}_{n}^{0}(A_i,W_i))$ and the second component is $\frac{1}{n}\sum_{i=1}^{n}H_{2}^{0}(A_i,W_i)(Y_i - \bar{Q}_{n}^{0}(A_i,W_i))$.\\
  
We note that the $D^{\star}_j(P_n^0))(O_i)$ has first component, $H_{1}^{0}(A_i,W_i)(Y_i - \bar{Q}_{n}^{0}(A_i,W_i)) + b_{n}^{0}(W_1) - \frac{1}{n}\sum_{i=1}^{n} b_{n}^{0}(W_i)$ and second component $H_{2}^{0}(A_i,W_i)(Y_i - \bar{Q}_{n}^{0}(A_i,W_i)) + (b_{n}^{0}(W_1) - \frac{1}{n}\sum_{i=1}^{n} b_{n}^{0}(W_i))^2 -\frac{1}{n}\sum_{i=1}^{n} (b_{n}^{0}(W_1) - \frac{1}{n}\sum_{i=1}^{n} b_{n}^{0}(W_i))^2 $.  The reader can notice our initial estimate of the parameter is  $(\frac{1}{n}\sum_{i=1}^{n} b_{n}^{0}(W_i), \frac{1}{n}\sum_{i=1}^{n} (b_{n}^{0}(W_1) - \frac{1}{n}\sum_{i=1}^{n} b_{n}^{0}(W_i))^2)$, our sample mean and variance of our estimated TE function values.
 
\subsubsection*{The Targeting Step}
\textbf{step 2:}
If $\vert P_{n}D^{\star}_{\Psi_j}(P_{n}^{m})\vert< \hat{\sigma}_n(D^{\star}_j(P_n^m))/n$ for $j \in \{1,2\}$ then $P_{n}^{\star}=P_{n}^{m}$ and go to step 4.  $\hat{\sigma}_n(\cdot)$ denotes the sample standard deviation as in section 2.2.1.  This insures that we stop the process once the bias is second order.  Recursions after this occurs are not fruitful. If $\vert P_{n}D^{\star}_{\Psi_j}(P_{n}^{m})\vert > \hat{\sigma}/n$, then $m = m+1$ and go to step 3. 

\textbf{step 3}

Define the following recursion, using euclidean inner product notation, $\langle \cdot, \cdot \rangle_{2}$, the same as a dot product:
{\footnotesize{}
\begin{equation}
\bar{Q}_{n}^{m}(A,W)=expit\left(logit(\bar{Q}_{n}^{m-1}(A,W))-d\epsilon\biggr\langle(H_{1}^{m-1}(A,W),H_{2}^{m-1}(A,W)),\frac{P_{n}(D^*_{\Psi}(P_{n}^{m-1})(O)}{\Vert P_{n}(D^*_\Psi(P_n^{m-1}(O)\Vert_{2}}\biggr\rangle_{2}\right)
\end{equation}
} where 
\begin{itemize}[noitemsep,nolistsep]
\item
$b_{n}^{m-1}=\bar{Q}_{n}^{m-1}(1,W)-\bar{Q}_{n}^{m-1}(0,W)$
\item
$H_{1}^{m-1}(A,W)=\left(\frac{2A-1}{g_{n}(A\vert W)}\right)$
\item 
$H_{2}^{m-1}(A,W)=2\left(b_{n}^{m-1}(W)-\sum_{i=1}^nb_{n}^{m-1}(W_i)\right)\left(\frac{2A-1}{g_{n}(A\vert W)}\right)$
\item 
d$\epsilon$ is set to 0.0001 (going smaller only costs more without improving accuracy)
\end{itemize}

For the case of TMLE, this recursively defines an estimate, $\bar{Q}_{n}^{m}(A, W)$, of the true outcome model, $\bar{Q}_0(A, W) = E_{P_0}[Y \mid A, W]$. Compute $L_{m}=-\sum_{i=1}^n\left[Y_i\text{log}\bar{Q}_{n}^{m}(A_i,W_i)+(1-Y_i)\text{log}(1-\bar{Q}_{n}^{m}(A_i, W_i))\right]$.  If $L_{m}\leq L_{m-1}$ then return to step 2. Otherwise $\bar{Q}_{n}^{m}= \bar{Q}_n^*$ and continue to step 4. \\ 

\textbf{step 4}

Our estimate for ATE and VTE is $\left(\frac{1}{n}\sum_{i=1}^{n} b_n^*(W_i),\frac{1}{n}\sum_{i=1}^{n} \left(b_n^*(W_i) - \frac{1}{n}\sum_{i=1}^{n} b_n^*(W_i)\right)^2\right)$, where $b_n^*(W_i)=\bar{Q}_{n}^{\star}(1,W_1) - \bar{Q}_{n}^{\star}(0,W_i)$.  If the outcome was scaled as $\frac{Y - a}{b-a}$ (see section 2, paragraph 1), then ATE and VTE is $\left(\frac{b-a}{n}\sum_{i=1}^{n} b_n^*(W_i),\frac{(b-a)^2}{n}\sum_{i=1}^{n} \left(b_n^*(W_i) - \frac{1}{n}\sum_{i=1}^{n} b_n^*(W_i)\right)^2\right)$.

\subsubsection{Simultaneous Estimation and Confidence bounds}
We often want to provide confidence intervals that simultaneously cover all the coordinates of $\Psi(P_{0})$ at a given significance level.  
The following is an added benefit of having the efficient influence curve at hand for we can account for correlated estimates in a tighter manner than a bonferroni correction \parencite{bonferroni}.  The reader may note we can generalize this procedure to any dimension but will use dimension 2 here as that is relevant to our parameter.  After completing the above algorithm we have, $D^*_{\Psi}(P_n^*)(O_i) = (D^*_{\Psi_1}(P_n^*)(O_i), D^*_{\Psi_2}(P_n^*)(O_i))$, for each subject indexed by $i \in 1:n$.  Consider the $2$-dimensional random variable $Z_{n} = (Z_{n,1},Z_{n,2})  \sim N(0_{2\times1}, \Sigma_{n})$, defined by two by two matrix, $\Sigma_{n}$, the sample correlation matrix of $D^*_{\Psi}(P_{n}^{\star})$.  Let $q_{n,\alpha}$ be the $\alpha^{th}$ quantile of the random variable $M_n = max(\vert Z_{n,1} \vert,\vert Z_{n,2} \vert)$.   Let $Z = (Z_1,Z_2) \sim N[0_{2 \times 1}, \Sigma]$, where $\Sigma$ is the correlation matrix of $D^{\star}_\Psi(P_0)$.  Let $q_{\alpha}$ be the $\alpha^{th}$ quantile of the random variable $M = max(\vert Z_{1} \vert,\vert Z_{2} \vert)$, i.e., the $\alpha^{th}$ quantile of the random variable giving the max number of standard deviations over the coordinates of $Z$.  We monte-carlo sample 5 million draws from the random variables $M_n$ to find $q_{n,\alpha}$.  We note that 5 million is a sufficient number to guarantee very little error in finding the true $q_{n,\alpha}$.  Applying the continuous mapping theorem \parencite{Vaart:1996aa} assures us under TMLE conditions that $Z_{n,i} \pm q_\alpha$ covers $Z_i$ for all $i \in 1:2$, at $(1-\alpha)\times 100\%$. Then we can apply the extended continuous mapping theoreom \parencite{Vaart:1996aa} to assure us $q_{n,\alpha}\longrightarrow q_{\alpha}$.  $q_{n,\alpha}$ is therefore an estimate of the number of standard errors needed to simultaneously cover both true parameter values at $(1 - \alpha)\times 100 \%$.  This results in the confidence bands
\[
\hat{Psi}_{j,n}\pm q_{n,\alpha}*\frac{\widehat{\sigma}_n(D_{j}^{\star}(P_{n}^{\star}))}{\sqrt{n}}
\]

which, will asymptotically cover all coordinates, $\Psi_{j}(P_0)$ of $\Psi(P_0)$,  simultaneously at the significance level, $1-\alpha$.  The reader may note $q_{n,\alpha}$ is the same as the bonferroni correction \parencite{bonferroni} if $\Sigma_n$ is the identity matrix.  As in section 2.2.1, $\widehat{\sigma}_n(\cdot)$ is the sample standard deviation.  If the outcomes were scaled according to $Y_s = \frac{Y - a}{b-a}$, then the standard error estimates for ATE and VTE are multiplied by $b-a$ and $(b-a)^2$, respectively.  

\section{Simulations}
We performed two different kinds of simulations, the first primarily to verify the remainder conditions in the theory of TMLE (condition 2, section 2.2.1).  The rest were performed to get a sense of what might occur with real data. Inference for all TMLE's used the sample standard deviation of the efficient influence curve approximation to form confidence intervals as per section 2. For logistic regression plug-in estimators of ATE and VTE, confidence bands were formed by using the delta method and the influence curve for the beta coefficients for intercept, main terms and interactions (see appendix b for the derivation). SuperLearner initial estimates had no accompanying measure of uncertainty since there is little theory for such, even if bootstrapping \parencite{Vaart:1996aa}.  

\subsection{Simulations with Controlled Noise}

Instead of drawing $W$ then $A$ and then $Y$ under a data generating distribution and then trying to recover the truth with various predictors or SuperLearner as we do later, we directly add heteroskedastic noise to $\bar{Q}_{0}$ in such a way that the conditions of TMLE hold and then use the noisy estimate as the initial estimate in the TMLE process. This does not necessarily
match what happens in practice because the noise we add is not related to the noise in the draw of $Y$ given $A$ and $W$. However, it is a valid way to directly test the conditions of TMLE in that we can control the noise so that the TMLE conditions hold and watch the asymptotics at play. We also note that we will assume $g_0$ is known because the other second order terms for VTE, involving bias in estimating $g_0$, are dependent on double robustness in the same way as for the ATE, for which the properties of TMLE are already well-known \parencite{Laan:2006aa, Laan:2011aa}. 

\subsubsection{Simulation Set-up}

$W_{1}\sim uniform[-3,3]$, $W_{2}\sim binomial(1,.5)$, $W_{3}\sim N[0,1]$
and $W_{4}\sim N[0,1]$.  We define $g_{0}(A\vert W)=expit(.5*(-0.8*W_1+0.39*W_2+0.08*W_3-0.12*W_4-0.15))$
, which is the true density of $A$ given $W.$  We kept our propensity scores between about 0.17 and 0.83 so as to avoid poor performance from positivity violations \parencite{positivity}.   $\mathbb{E}_{0}[Y\vert A,W]=\bar{Q}_{0}(A,W)=expit(.2*(.1*A+2*A*W_1-10*A*W_2+3*A*W_3+W_1+W_2+.4*W_3+.3*W_4))\}$ which
defines the density of $Y$ given $A$ and $W$ for a binary outcome. Define the blip function as $b(W)=\mathbb{E}_{0}[Y\vert A=1,W]-\mathbb{E}_{0}[Y\vert A=0,W]$ and we have $\Psi(P_{0})=var_{0}(b(W))=0.0636$. This is a substantial
VTE to avoid getting near the parameter boundary at 0. \\ 

below we illustrate the process for one simulation.  For each sample size, n, we performed the simulation 1000 times.  We note that $rate$ is some number which we will set to less than -1/4 (-1/3 in this case) in order to satisfy TMLE conditions. 
\begin{enumerate}
\item define $bias(A,W,n)=1.5n^{rate}(-.2+1.5A+0.2W_{1}+W_{2}-AW_{3}+W_{4})$
\item define heteroskedasticity: $\sigma(A,W,n)=0.8n^{rate}\vert3.5+0.5W_{1}+0.15W_{2}+0.33W_{3}W_{4}-W_{4}\vert$
\item define $b(A,W,n,Z)=bias(A,W,n)+Z\times\sigma(A,W,n)$ where Z is standard
normal
\item draw $\{Z_{i}\}_{i=1}^{n}$ and $\{X_{i}\}_{i=1}^{n}$ each from standard
normals
\item $\bar{Q}_{n}^{0}(1,W_{i})=expit\left(logit\left(\bar{Q}_{0}(1,W_{i})\right)+b(1,W_{i},n,Z_{i})\right)$
\item $\bar{Q}_{n}^{0}(0,W_{i})=expit\left(logit\left(\bar{Q}_{0}(0,W_{i})\right)+0.5b(1,W_{i},n,Z_{i})+\sqrt{0.75}b(0,W_{i},n,X_{i})\right)$
\item $\bar{Q}_{n}^{0}(A,W)=A*\bar{Q}_{n}^{0}(1,W)+(1-A)\bar{Q}_{n}^{0}(0,W)$
\end{enumerate}
We note that we placed correlated noise on the true $\bar{Q}_{0}(1,W)$
and $\bar{Q}_{0}(0,W)$ so as to make the blip function``estimates'' of
similar noise variance as the initial ``estimates'' for $\bar{Q}_{0}(A,W)$. 
By a Taylor series expansion about the truth, it is easy to see the
above procedure will satisify the remainder term conditions of 2.2.1.  We have that $\bar{Q}_{n}^{0}(1,W)=\bar{Q}_{0}(1,W)+\bar{Q}_{0}(1,W)(1-\bar{Q}_{0}(1,W))b(1,W,n,Z)+O(b^{2}(1,W,n,Z))$
and likewise for $\bar{Q}_{n}^{0}(0,W)$ and thus trivially, $\sqrt{\mathbb{E}_{0}\left(b_{n}^{0}(W)-b_{0}(W)\right)^{2}}$
is of order $n^{rate}$ with $rate < -1/4$.  As previously mentioned, we need not worry
about any second order terms but $\mathbb{E}_{0}\left(b_{n}^{0}(W)-b_{0}(W)\right)^{2}$
because we are using the true $g_{0}$. Condition 1 of section 2.2.1 is easily satisfied and
Condition 3, the donsker condition, is satisfied since our ``estimated''
influence curve, $D^{*}(\bar{Q}_{n}^{0},g_{0}),$ depends on a fixed
function of $A$ and $W$ with the addition of independently added
random normal noise. \\

The simulation result, displayed in figure 1, is in alignment with the theory established for the TMLE estimator of VTE but how fast the asymptotics come into play is an important issue as to the relevance of the asymptotic theory.  

\begin{figure}[H]
  \centering
  \caption{}
  \includegraphics[scale=.2]{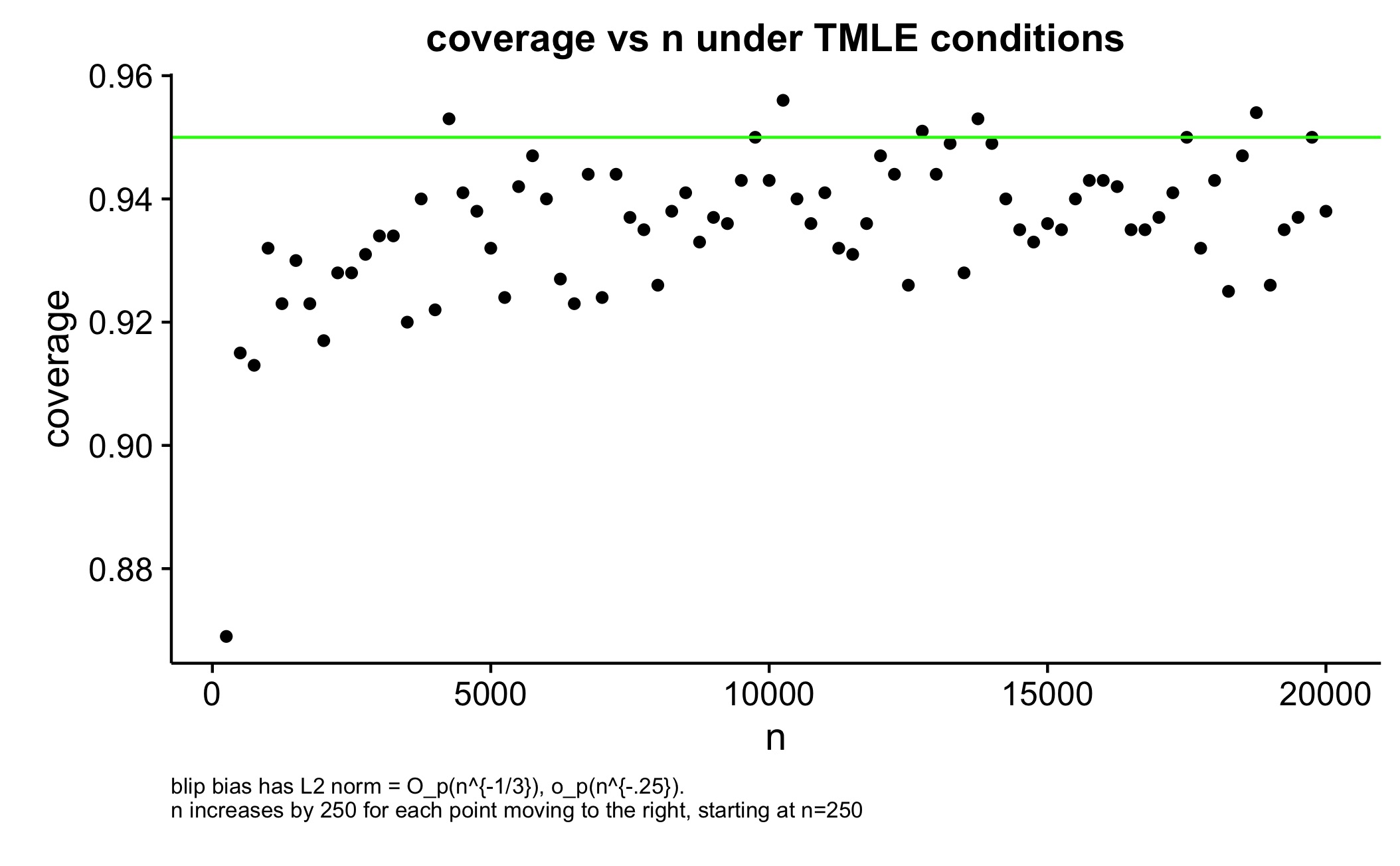}
\end{figure}

\subsection{Simulations That Are More Realistic}
We will stick with binary outcome and treatment, though the results will be comparable for continuous outcome. Unless otherwise noted, sample size n = 1000 and the number of simulations = 1000.  Throughout the simulations we generated the covariates as follows: $W_1 \sim uniform[-3,3]$, $W_2 \sim \text{standard normal}$, $W_3 \sim \text{standard normal}$ and $W_4 \sim \text{standard normal}$.  These simulations are more realistic in that we try to recover via machine learning, an "unknown" treatment mechanism and outcome model.  When we specify the models correctly we are considering a "best case" scenario where our regressions achieve parametric rates of convergence to the truth.  When we misspecify a model in the data generating system, we try to recover its non-linear functional form with ensemble machine learning, in the event that a linear model including interactions (to pick up heterogeneity) is catastrophic for estimating VTE.  If we are going to estimate VTE, a main terms linear model will assume VTE is essentially 0, so comparing ensemble learning methods with such is not very informative.

\subsubsection{Well-specified TMLE Initial Estimates, Skewing}
"Well-specified" means we are fitting a well-specified (functional form is correct for both outcome model, $\mathbb{E}[Y\vert A,W]=\bar{Q}(A,W)$ and treatment mechanism, $\mathbb{E}[A\vert W]=g_{0}(A,W)$), as in a logistic linear model for both the treatment mechanism and outcome models. The only point of these simulations is to show that TMLE preserves excellent initial estimates and also to show approximately what size sample will lead to skewing (and therefore bias) of the sampling distribution for blip variance when the truth is near the lower parameter bound of 0. We can say as a rule of thumb, a sample size of 500 or more is probably needed to even hope to get reliable estimates for blip variances in the neighborhood of 0.025 (15.8\% standard deviation), a rule confirmed by figures 2,3 and 4. $\bar{Q}(A,W) = expit(0.14 (2 A + W1 + a A W_1 - b A W_2 + W_2 - W_3 + W_4))$ for the outcome regression, varying $a$ and $b$ to adjust the size of the blip variance. $\mathbb{E}[A\vert W] = g0 = expit(-0.4 * W1 + 0.195 * W2 + 0.04 * W3 - 0.06 * W4 - 0.075)$ was the true treatment mechanism and we avoid bad positivity violations here.     


\begin{figure}[H]
  \centering
  \caption{}
  \includegraphics[scale=.2]{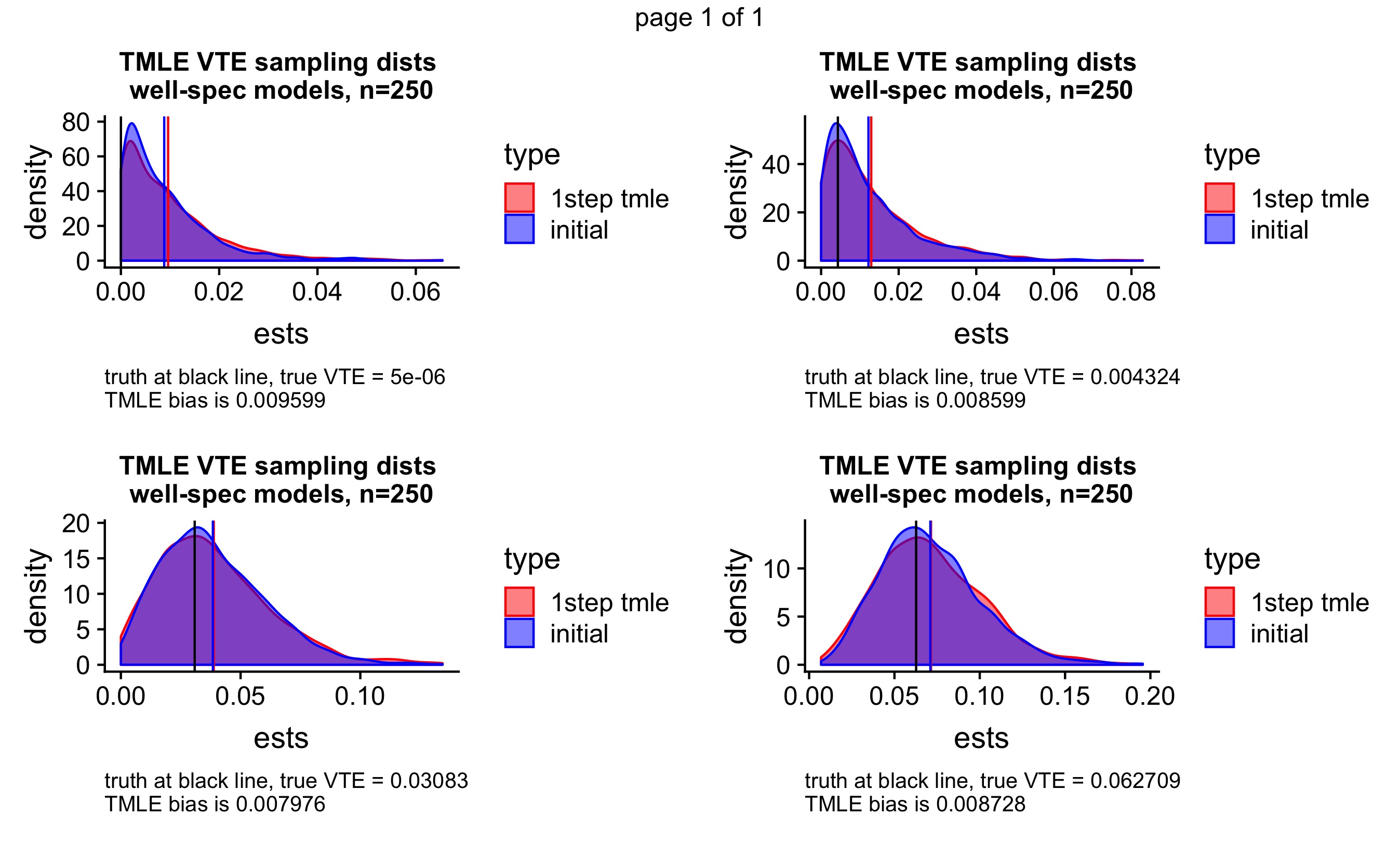}
\end{figure}

\begin{figure}[H]
  \centering
  \caption{}
  \includegraphics[scale=.2]{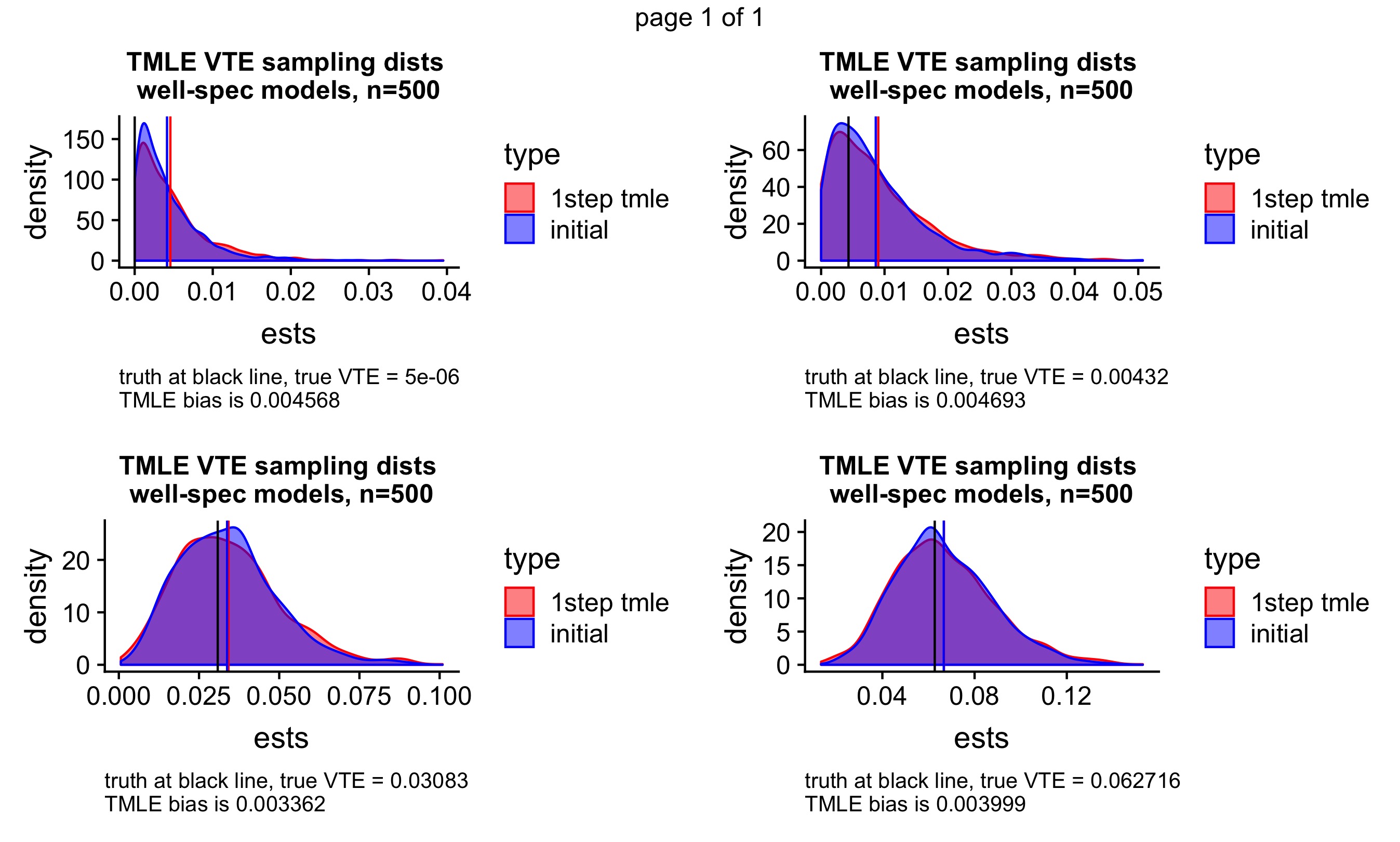}
\end{figure}

\begin{figure}[H]
  \centering
  \caption{}
  \includegraphics[scale=.2]{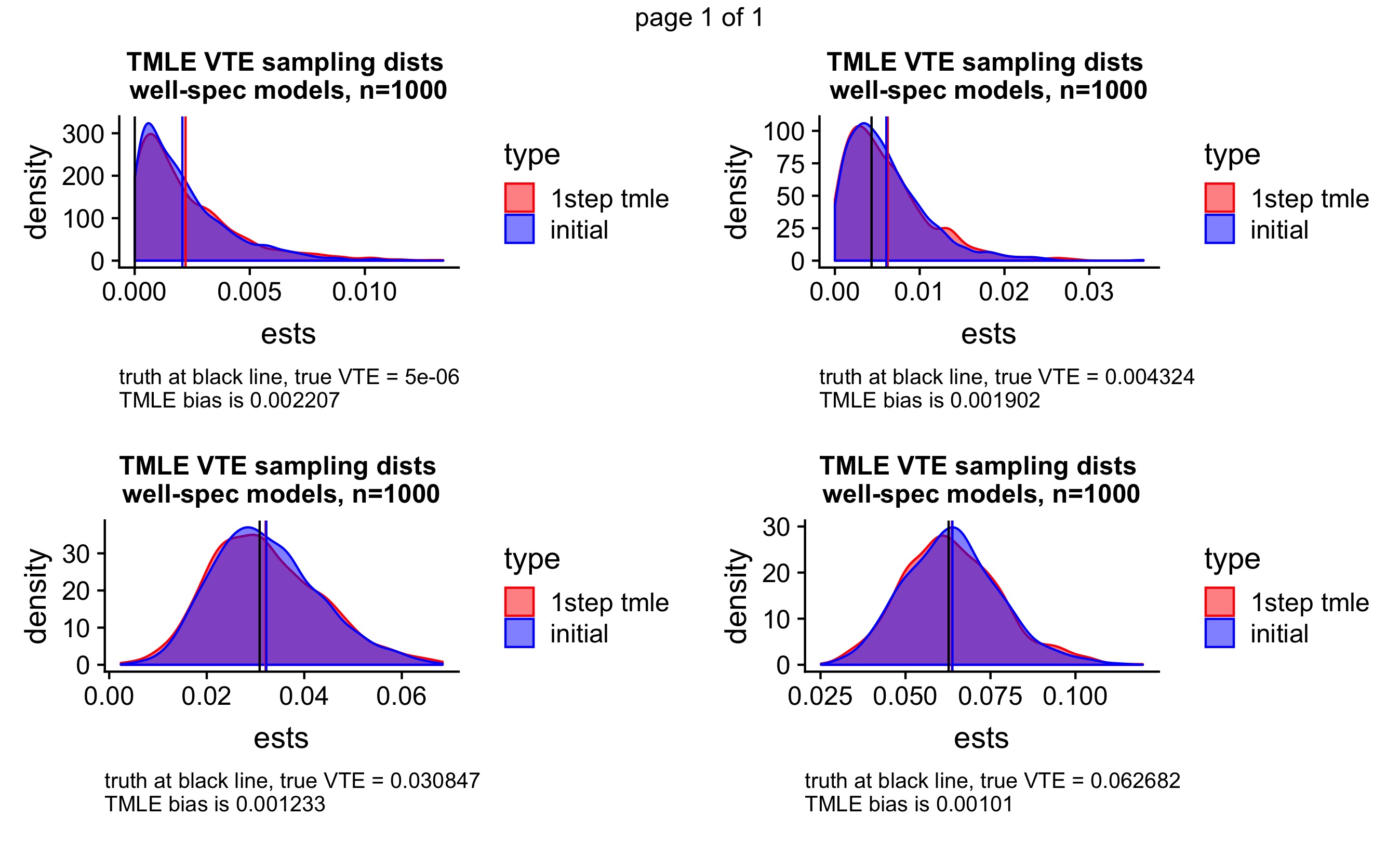}
\end{figure}

\subsubsection{SuperLearner Details For Remaining Simulations}

Targeted learning \parencite{Laan:2011aa} features the use of data adaptive prediction methods optimized by the ensemble learningR packages, such as SuperLearner \parencite{Eric-Polley:2017aa}, H2O  \parencite{LeDell:2017aa} or the most recent sl3 \parencite{sl3}.  Superlearner, which picks the best single algorithm in the library, as decided by the cross-validation of a valid loss function, has risk that converges to the oracle selector at rate $O\left(log \left(k(n)\right)/n\right)$ where $k(n)$ is the number of candidate algorithms, under very mild assumptions on the library of estimators \parencite{Mark-J.-van-der-Laan:2007aa}.  Generally the best or nearly best learner in the library is the optimal convex combination of algorithms that forms the SuperLearner predictor in our simulations and often in practice.  The SuperLearner convex combination of algorithms might be more familiar to the reader as a form of model stacking \parencite{stacking}.  

\subsubsection{Use of Highly Adaptive Lasso: Making Initial Predictions $\bar{Q}_n^0$ and $g_n$}
We refer the reader to the online supplementary materials for more complete details of the SuperLearner results. It is notable that any SuperLearner library containing highly adaptive lasso will yield asymptotically efficient estimates for TMLE, assuming the true conditional mean outcome is right-hand continuous with left-hand limits \parencite{cadlag} and has variation
norm smaller than a constant M \parencite{Laan:2015aa}.  In finite samples, however, some machine learning algorithms might be better suited for prediction and so we rely on ensemble learning.  

\subsubsection{SuperLearner Library 1, termed SL1, Avoiding Overfitting} 
This library will be indicated by "SL1" in the simulation results.
\begin{enumerate}
\item
SL.gam3, a gam \parencite{Hastie:2017aa} using degree 3 smoothing splines, screening main terms, top 10 correlated variables with the outcome and top 6.
\item
SL.glmnet\_1, SL.glmnet\_2 and SL.glmnet\_3 \parencite{Jerome-Friedman:2010aa} performed a lasso, equal mix between lasso and ridge penalty and ridge regressions.
\item
nnetMain\_screen.Main \parencite{Venbles:2002aa}
is a neural network with decay = 0.1 and size = 5 using main terms.
\item
earthMain \parencite{Mill} is data adaptive penalized regression spline fitting method. They allow for capturing the subtlety of the true functional form. We allowed degree = 2, which is interaction terms with the default penalty = 3 and a minspan = 10 (minimum observations between knots).  
\item
SL.glm \parencite{R-Core-Team:2017aa} logistic regression and we used main terms, top 6 correlated variables with outcome and top 10 as well as a standard glm with main terms and interactions (glm\_mainint\_screen.Main)
\item
SL.stepAIC \parencite{R-Core-Team:2017aa} uses Akaike criterion in forward and backward step regression
\item
SL.hal is the highly adaptive lasso \parencite{Benkeser:2016aa}, which guarantees the necessary $L_2$ rates of convergence and therefore, if included in the SuperLearner library, guarantees asymptotic efficiency \parencite{Laan:2015ab}. hal output is guaranteed to be of finite sectional variation norm and thus is guaranteed to be donsker or not overfit the data.  
\item
SL.mean returns the mean outcome for assurance against overfitting
\item
rpartPrune \parencite{Therneau:2017aa} is recursive partitioning with cp = 0.001 (must decrease the loss by this factor) minsplit = 5 (min observations to make a split), minbucket = 5 (min elements in a terminal node)
\end{enumerate}

\subsubsection{SuperLearner Library 2 termed SL2, More Aggressive, overfits a little}
This library will be indicated by "SL2" in the simulation results.
This library is identical to Library 1, except we added the following learners, which were tuned to maximize cross validated loss on a few draws from case 2a data generating distribution.  Thus these additions do not severely overfit, in general.
\begin{enumerate}
\item
SL.ranger \parencite{Wright:2017aa}: A random forest which picked 3 features at a time formed 2500 trees and had a minimum leaf size set to 10.  
\item
SL.xgboost \parencite{Chen:2017aa}: One xgboost fit on all main terms and interactions with stumps (depth 1 trees), allowing a minimum of 3 observations per node, a learning rate of 0.001 and summing 10000 trees.  We also included an xgboost using depth of 4 trees on main terms only with same shrinkage and minimum observations per node but only 2500 trees.  
\end{enumerate}

\subsubsection{Case 1: Well-Specified Treatment Mechanism, Misspecified Outcome}
The following example, encapsulated in figure 5, demonstrates three things
\begin{enumerate}
\item 
\textbf{Enormous gains possible with flexible estimation}. Using a logistic regression with main terms and interactions plug-in estimator and the delta method for inference, yielded a bias of -0.065 (the truth is 0.079), missing almost the entire blip variance and covering at 0\%. The TMLE could not help the initial estimates using the same logistic regression so reliance on a parametric model can be a disaster as opposed to ensemble learning. 
\item 
\textbf{Difference in robustness} between estimating causal risk difference and blip variance.  The severely misspecified logistic regression with main terms and interactions initial estimate for the outcome model and well-specified treatment mechanism yielded a TMLE for causal risk difference (which is doubly robust) that covered at 95.6\%,  where as for blip variance it never covers the truth.  

\item
\textbf{The advantage of CV-TMLE over TMLE}. The same SuperLearner used for initial estimates yields some skewing and bad outliers as well as bigger bias and variance for TMLE as opposed to a normally distributed CV-TMLE sampling distribution.  Just some overfitting by random forest about 20\% of the time out of the library of 18 learners managed to cause outliers for TMLE, ruining normality of the sampling distribution and causing higher bias and variance, where as CV-TMLE appeared unaffected by the overfitting.  Overfitting means essentially that the metric entropy of the class of functions considered by random forest was too big.  To give some intuition behind the donsker TMLE condition, an example of a large donkser class is the set of functions of bounded variation \parencite{Vaart:1996aa} meaning the function class is smooth in some sense, not allowing unlimited ups and downs between predictions, such as overfitting allows.  Since the influence curve approximation is defined partially in terms of the mean outcome model, overfitting causes the class of functions for the influence curve approximation to be non-donsker as well.  When trying to do a good job estimating the mean outcome model as in this simulation, CV-TMLE allows highly adaptive machine learning we need to minimize the second order remainder bias without paying a big price for overfitting.  

\end{enumerate}
\subsubsection*{Simulation Set-up}
$\mathbb{E}[Y\vert A,W] = Q0 = expit(0.28 * A + 2.8 * cos(W1) * A + cos(W1) - 0.56 * A * (W2^2) + 0.42 * cos(W4) * A + 0.14*A * W1^2)$.  $\mathbb{E}[A\vert W] = g0 = expit(-0.4 * W1 + 0.195 * W2 + 0.04 * W3 - 0.06 * W4 - 0.075)$, which we will specify model correctly in all cases with a linear logisitic fit.  True Causal Risk Difference = 0.078.  True CATE Variance = 0.085.

\begin{table}[!htbp] \centering 
  \caption{Performance of the Estimators} 
  \label{} 
\begin{tabular}{@{\extracolsep{5pt}} lcccc} 
\\[-1.8ex]\hline 
\hline \\[-1.8ex] 
 & var & bias & mse & coverage \\ 
\hline \\[-1.8ex] 
TMLE LR & $0.00001$ & $$-$0.08207$ & $0.00675$ & $0$ \\ 
LR plug-in & $0.00005$ & $$-$0.07134$ & $0.00514$ & $0$ \\ 
CV-TMLE SL2 & $0.00028$ & $$-$0.00930$ & $0.00037$ & $0.87375$ \\ 
CV-TMLE SL2\textasteriskcentered  & $0.00028$ & $$-$0.00924$ & $0.00037$ & $0.88577$ \\ 
TMLE SL2 & $0.00057$ & $0.01584$ & $0.00082$ & $0.83100$ \\ 
TMLE SL2\textasteriskcentered  & $0.00057$ & $0.01591$ & $0.00082$ & $0.86000$ \\ 
TMLE SL1 & $0.00033$ & $0.00802$ & $0.00040$ & $0.93193$ \\ 
TMLE SL1\textasteriskcentered  & $0.00033$ & $0.00804$ & $0.00040$ & $0.93994$ \\ 
\hline \\[-1.8ex] 
\multicolumn{5}{l}{* indicates causal risk difference and blip variance estimated} \\
\multicolumn{5}{l}{simultaneously with 1step tmle covering both parameters for 95\%}\\
\multicolumn{5}{l}{simultaneous confidence intervals.  LR indicates logistic regression}\\
\multicolumn{5}{l}{with main terms and interactions. SL1 Library did not overfit}\\
\multicolumn{5}{l}{out 20\% of the time with one out of 18 algorithms, still causing}\\
\multicolumn{5}{l}{outliers, necessitating CV-TMLE as a precaution.}\\
\end{tabular} 
\end{table} 


\begin{figure}[H]
  \centering
  \caption{}
  \includegraphics[scale=.2]{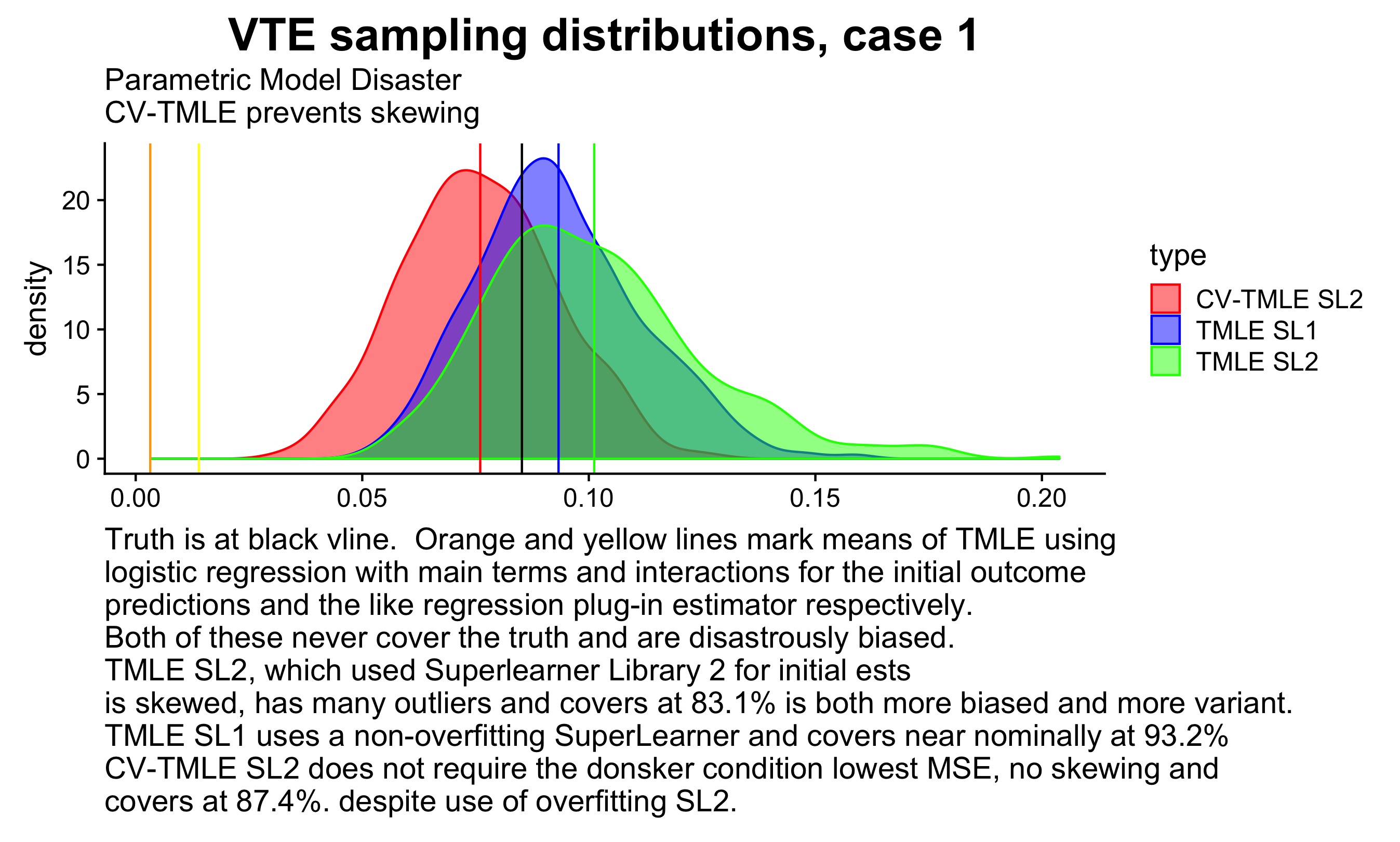}
\end{figure}

\subsection{Mixed Results and Need for Future Refinements}

We again demonstrate how employing targeted learning with CV-TMLE can recover misspecified p-score as well as outcome models when parametric models are terrible, but coverage is below nominal and at times very poor, depending on the situation.  This is not a problem solely for the case of an observational study as the authors have found the main culprit in poor coverage to be the second order remainder term consisting of the integral of true blip function minus estimated blip function squared (see Appendix B for remainder term derviation).  Standard parametric models are again disastrous for both cases 2 and 3, never covering the truth and missing almost all of the VTE as in case 1. Only in case 2 does CV-TMLE, using a pretty small superlearner library of 8 algorithms including, xgboost, neural networks, glm with main terms and interactions, earth, sample mean and the highly adaptive lasso  \parencite{Laan:2015aa}, achieves decent coverage of 83\% and reduces bias of the initial estimate from -0.015 to -0.009.  In case 3, CV-TMLE with the same SuperLearner library only covers at 32\%.\\   

For both case 2 and case 3 we used the following model for the true treatment mechanism:

$\mathbb{E}_0[A \mid W] = expit(.4*(-0.4 * W1*W2 + 0.63 * W2^2 -.66*cos(W1) - 0.25))$

Case 2 true outcome model:

\begin{scriptsize}
$\mathbb{E}_0[Y \mid A,W] = expit(0.1 * W1*W2 + 1.5*A*cos(W1) + 0.15*W1 - .4*W2*(abs(W2) > 1) -1*W2*(abs(W2 <=1)))$
\end{scriptsize}

Case 3 true outcome model:

$\mathbb{E}_0[Y \mid A,W] = expit(0.2 * W1*W2 + 0.1 * W2^2 - .8*A*(cos(W1) + .5*A*W1*W2^2) - 0.35)$

\section{Demonstration on Real Data}

The estimator described in the previous sections was applied to a real dataset. GER-INF is a placebo-controlled randomized trial published in 2002 that evaluated the impact on mortality of low dose steroid administration in patients hospitalized in the intensive care unit (ICU) for septic shock \parencite{gerinf}. This study was performed in 19 ICUs in France and enrolled a total of 299 patients.  Because steroid supplementation in this context was expected to be beneficial in patients with relative adrenal insufficiency, a corticotropin stimulation test was performed in all patients. A pre-specified subgroup analysis was planned in patients who were nonresponders to the corticotropin stimulation test (relative adrenal insufficiency) and in those responders to the corticotropin stimulation test (no relative adrenal insufficiency). In this sample, 7 days of low dose hydrocortisone associated with fludrocortisone were associated with a  reduced risk of death in patients with septic shock. As expected, this reduction was significant in patients with relative adrenal insufficiency as reflected by a lack of response following the corticotropin stimulation test. Because of the apparent heterogeneity in treatment effect, at least partially explained by the presence of a relative adrenal insufficiency, we used the proposed estimator to quantify treatment effect variability across patients. \\

We controlled for the following confounders in fitting the outcome model: SAPS2 severity score (Simplified Acute Physiology Score to assess mortality), Sequential Organ Failure Assessment (SOFA) severity score at baseline
, lactate level,  cortisol level before corticotropin, an indicator of responding to the corticotropin stimulation test, site of infection, mechanical ventilation at baseline, patient origin (hospital acquired infection or not), indicator of elective surgery or urgent surgery, maximum difference in cortisol concentration before and after stimulations, indicator of use of etomidate for anesthesia (drug known to alter the adrenal function), blood sugar and the pathogen responsible for the infection.  In this case, the treatment assignment was random and thus we can identify VTE from the data as a measure of how much of the heterogeneity in treatment effect is due to confounders normally used to assign treatment.  We note, in the case of variables missing data, we create an indicator of missingness and use the median or, in the case of categorical variables, the most popular category as an imputed value.  The table below summarizes our data.\\  

\FloatBarrier
\begin{table}[H] \centering
  \caption*{\begin{scriptsize}\textbf{GER-INF Summary, N=299}\end{scriptsize}}
 {\setstretch{1.0}
\begin{scriptsize}
\begin{tabular}{@{\extracolsep{5pt}} lccc} 
\textbf{Variable} & \multicolumn{1}{c}{} & \multicolumn{1}{c}{}\\ 
\hline \\[-1.8ex] 
\textbf{Outcome: Renal Failure} & \hspace{1 in} & $$ \\ 
\hspace{.2 in}Yes &\hspace{1 in} & $173$ \\ 
\hspace{.2 in}No &\hspace{1 in} & $126$ \\ 
\textbf{Treatment: Rec. Steroid} & \hspace{1 in} & $$ \\ 
\hspace{.2 in}Yes &\hspace{1 in} & $150$ \\ 
\hspace{.2 in}No &\hspace{1 in} & $149$ \\ 
\textbf{Age in yrs} &\hspace{1 in} &60.8 (16.1)  \\
\textbf{IGS2} &\hspace{1 in} & 62.6 (23.6)  \\
\textbf{SOFA0} &\hspace{1 in} & 11 (3.2)  \\
\hspace{.2 in} missing &\hspace{1 in} & $24$ \\ 
\textbf{CORT0} &\hspace{1 in} & 23.3 (30.9)  \\
\textbf{DELTA\_CORTmax} &\hspace{1 in} & 6.1 (22.3)  \\
\textbf{RESPONDER} & \hspace{1 in} & $$ \\ 
\textbf{GLYC} &\hspace{1 in} & 175.8 (106.2)  \\
\hspace{.2 in} missing &\hspace{1 in} & $6$ \\ 
\hspace{.2 in}Yes &\hspace{1 in} & $70$ \\ 
\hspace{.2 in}No &\hspace{1 in} & $229$ \\ 
\textbf{VMO} & \hspace{1 in} & $$ \\ 
\hspace{.2 in}Yes &\hspace{1 in} & $298$ \\ 
\hspace{.2 in}No &\hspace{1 in} & $1$ \\ 
\textbf{origine} & \hspace{1 in} & $$ \\ 
\hspace{.2 in}Yes &\hspace{1 in} & $112$ \\ 
\hspace{.2 in}No &\hspace{1 in} & $187$ \\ 
\textbf{etomidate} & \hspace{1 in} & $$ \\ 
\hspace{.2 in}Yes &\hspace{1 in} & $76$ \\ 
\hspace{.2 in}No &\hspace{1 in} & $213$ \\ 
\textbf{site}&\hspace{1 in} & $$ \\ 
\hspace{.2 in} Multiple &\hspace{1 in} & $144$ \\ 
\hspace{.2 in} Lung &\hspace{1 in} & $89$ \\ 
\hspace{.2 in} GI &\hspace{1 in} & $31$ \\ 
\hspace{.2 in} Soft Tissue &\hspace{1 in} & $16$ \\ 
\hspace{.2 in} Bacteremia &\hspace{1 in} & $6$ \\ 
\hspace{.2 in} Other &\hspace{1 in} & $12$ \\ 
\hspace{.2 in} missing info &\hspace{1 in} & $1$ \\ 
\textbf{typeadmission}&\hspace{1 in} & $$ \\ 
\hspace{.2 in}1: &\hspace{1 in} & $179$ \\ 
\hspace{.2 in}2: &\hspace{1 in} & $10$ \\ 
\hspace{.2 in}3: &\hspace{1 in} & $110$ \\ 
\end{tabular}
\end{scriptsize}}
\end{table}

We provide estimates below for ATE and VTE simultaneously and used the delta method to also give a confidence interval for the $\sqrt{VTE}$, because such is on the scale of measurement of ATE.  We can see the left bounds of the confidence interval for VTE and $\sqrt{VTE}$ strayed into the negative numbers, which are not possible estimates for such parameters.  Log-scaling the confidence intervals only make them excessively large and therefore not useful due to the unscaled confidence bands centering close to 0.  If we reference our discussion about skewing in section 3.2.1, we see that if the true VTE were as small as our estimate, then our sampling distribution under the best case scenario of well-specified outcome model is skewed.  We would need a true VTE of around 0.06 to have any hope, even under the best case scenario of well-specified outcome model, to have normal sampling distributions for estimating VTE.  There is also the issue of second order remainder term bias as we discussed, which could account for missing a truly larger VTE in our estimates.  The second order remainder term in section 2.2.2 is the square of the $L^2$ norm bias in estimating the blip function so, prioritizing the blip function in our outcome model estimation is the subject for future work to improve blip function estimation in finite samples.\\

\begin{table}[!htbp] \centering 
  \caption{CV-TMLE Results for Simultaneous Estimation of ATE, VTE and sqrt(VTE)} 
  \label{} 
\begin{tabular}{@{\extracolsep{5pt}} ccccc} 
\\[-1.8ex]\hline 
\hline \\[-1.8ex] 
 & est & se & lower & upper \\ 
\hline \\[-1.8ex] 
ATE & $$-$0.088$ & $0.050$ & $$-$0.199$ & $0.023$ \\ 
VTE & $0.002$ & $0.004$ & $$-$0.008$ & $0.012$ \\ 
sqrt(VTE) & $0.045$ & $0.048$ & $$-$0.061$ & $0.152$ \\ 
\hline \\[-1.8ex] 
\end{tabular} 
\end{table} 

\subsection{SuperLearner}
We used a SuperLearner library consisting of 40 algorithms, including main terms and interactions when applying any regression methods, such as logistic regression, bayes generalized linear models, lasso and ridge regressions (combinations of $L^1$ and $L^2$ penalty) and earth, which uses data adaptive regression splines.  For boosting trees (xgboost), we used depth 2 trees to allow interaction as well as depth 1 trees with main terms and interactions as the covariates.  Also for boosting, we used different hyperparameters for the number of trees in combination with different learning rates.  We used recursive partitioning and random forest, which are tree methods and therefore account for interactions, as well as neural networks which are more non-parametric approaches.  We also applied k nearest neighbors for prediction.  In addition we applied screening of the top 25 correlated variables with the outcome in conjunction with then running the machine learning algorithm and did the same for the top 5 variables when running bayes generalized linear models as well as glm.  The convex combination of learners, or SuperLearner, had the lowest cross-validated risk (negative log-likelihood) of 0.536 with random forest algorithms and the lasso with all interactions included as variables performing equally well.  The discrete SuperLearner \parencite{Eric-Polley:2017aa}, or the one that chooses the lowest risk algorithm to use for each fold's validation set predictions, had a cross-validated risk of 0.57.  Without knowing which algorithm would perform the best a priori, we can see SuperLearner did the job it was supposed to do in combining our ensemble in an optimal way, according to cross-validated risk.

\section{Discussion}
We can see there are two great challenges in estimating VTE, one being the fact the parameter is bounded below at 0, skewing and biasing the estimates when the true variance is too small for the sample size. In the future we might develop an improvement over log-scaling to form adjusted confidence bounds if they stray into the negative zone.  To our eyes, this problem is not as crucial as obtaining reliable inference when the true variance is large enough to be estimated for the sample size. On this front the second order remainder term, $-\mathbb{E}_{0}\left(b_{0}(W)-b(W)\right)^{2}$, has proven difficult to contain in finite samples.  We are certain for larger samples we can show that a Superlearner library including the highly adaptive lasso \parencite{Benkeser:2016aa} will achieve the necessary rates of convergence for this term to be truly second order, however, such is not trivial to show conclusively via very time-consuming and computer intensive simulations and besides, a sample size of 1000 for 4 covariates and treatment is certainly of practical importance.  To this end, the authors plan to propose in a follow-up paper, a novel way to estimate the blip function that will also yield estimates of the outcome predictions that are necessarily between 0 and 1, all the while performing asymptotically as good as just fitting the outcome model. We need blip function estimates, $b(W)$, that are compatible with the outcome predictions in order to perform the crucial targeting step of the TMLE procedure. i.e., we need $b(W) = \bar{Q}(1, W) - \bar{Q}(0,W)$.\\

Another approach to yielding better coverage would be to account for the second order remainder term via the use of the new highly adaptive lasso (HAL) non-parametric bootstrap  \parencite{2017arXiv170809502V}, to form our confidence intervals.  HAL, as mentioned previously in this paper, guarantees the necessary rates of convergence of the second order remainder terms under very weak conditions and is thus guaranteed to yield asymptotically efficient TMLE estimates \parencite{Laan:2015ab}, using the empirical variance of the efficient influence curve approximation for the standard error.  However, in finite samples such a procedure yields lower than nominal coverage due to the unforgiving second order remainder term of non-doubly robust estimators as we have for VTE.  We feel the HAL CV-TMLE, using a non-parametric bootstrap addresses this issue and accounts for the second order bias, the subject of more future work.  Perhaps best will be performing the novel blip fitting procedure with the HAL bootstrap.   

\clearpage
\appendix{\begin{Huge}\textbf{Appendix}\end{Huge}}
\section{The Influence Curve and Remainder Term Derivations}
\subsubsection*{\noindent Set Up}
Our observed data, O, is of the form, O = (W,A,Y), where W is a set of confounders, A is a binary treatment indicator and Y is the outcome, continuous or binary.  \noindent $O\sim P\in\mathcal{M},\text{ non-parametric}$.  It is important that we can factorize the density for $P$ is as $p(o)=p_{Y}(y\vert a,w)g(a\vert w)p_{W}(w)$.

\subsubsection{Tangent Space for Nonparametric Model}
We consider the one dimensional set of submodels that pass through $P$ at $\epsilon=0$ \parencite{Vaart:2000aa} $\{P_\epsilon \text{ def. by density, }p_\epsilon=(1+\epsilon S)p\vert \int SdP=0,\int S^2dP<\infty \}$.
The tangent space is the closure in $L^2$ norm of the set of scores, $S$, or directions for the paths defined above. We write:
\begin{eqnarray*}
T & = & \overline{\left\{ S(o)\vert\mathbb{E}S=0,\mathbb{E}S^2<\infty\right\}} \\
 & = & \overline{\{S(y\vert a,w)\vert\mathbb{E}_{P_{Y}}S=0,\mathbb{E}S^2<\infty\}}\oplus\overline{\{S(a\vert w)\vert\mathbb{E}_{P_{A}}S=0,\mathbb{E}S^2<\infty\}}\oplus\overline{\{S(w)\vert\mathbb{E}_{P_{W}}S=0,\mathbb{E}S^2<\infty\}}\\
  & = & T_{Y}\oplus T_{A}\oplus T_{W}
\end{eqnarray*}

For a non-parametric model, $T = L^{2}_{0}(P)$ forms a Hilbert space with inner product defined as $\langle f, g\rangle=\mathbb{E}_Pfg$.  Our notion of orthogonality now is $f \perp g$ if and only if $\langle f, g \rangle = 0$ and, therefore, the above direct sum is valid.  In other words, every score, $S$, can be written as $ \frac{d}{d\epsilon}log(p_{\epsilon})\vert_{\epsilon=0}=S(w,a,y)=S_Y(y\vert a,w)+S_A(a\vert w) + S_W(w)$ where, due to the fact $p_\epsilon=(1+\epsilon S)p=p_{Y\epsilon}p_{A\epsilon}p_{W\epsilon}$, it is easy to see $\frac{d}{d\epsilon}log(p_{Y\epsilon})\vert_{\epsilon=0}=S_Y(y\vert a,w)$, $\frac{d}{d\epsilon}log(p_{A\epsilon})\vert_{\epsilon=0}=S_A(a\vert w)$ and $\frac{d}{d\epsilon}log(p_{W\epsilon})\vert_{\epsilon=0}=S_W(w)$.  Furthermore we know that a projection of $S$ on $T_Y$ is  

\begin{eqnarray*}
S_Y(y \mid w, a) &= &S(w,a,y) - E[S(W, A, Y) \mid W = w, A = a] \\
& = & S(w,a,y) - \int S(w,a,y) p_Y(y \mid w, a) d\nu(y)\\
& = & \frac{d}{d\epsilon}log(p_{Y\epsilon})\vert_{\epsilon=0}
\end{eqnarray*}

\subsubsection{Efficiency Theory in brief}
Our parameter of interest is a mapping from the model, $\mathcal{M}$ to the real numbers given by $\Psi(P)=\mathbb{E}(b(W)-\mathbb{E}b)^2$ where $b(W)=\mathbb{E}[Y\vert A=1,W]-\mathbb{E}[Y\vert A=0,W]$.  We can borrow from van der Vaart, 2000, who defines an influence function, otherwise known as a gradient, as a continuous linear map from $T$ to the reals given by
\begin{equation}
\underset{\epsilon\rightarrow 0}{lim}\left(\frac{\Psi(P_\epsilon)-\Psi(P)}{\epsilon}\right)\longrightarrow \dot{\Psi}_{P}(s)
\end{equation}
We note to the reader, we imply a direction, $S$, when we write $P_{e}$, which has density $p(1+\epsilon S)$, but generally leave it off the notation as understood.\\

By the riesz representation theorem \parencite{riesz} for Hilbert Spaces, assuming the mapping in (1) is a bounded and linear functional on $T$, it can be written in the form of an inner product $\langle D^*(P),g \rangle$ where $D^*$ is a unique element of $T$, which we call the canonical gradient or efficient influence curve.  Thus, in the case of a nonparametric model, the only gradient is the canonical gradient.  It is notable that the efficient influence curve has a variance that is the lower bound for any regular asymptotically linear estimator \parencite{Vaart:2000aa}. Since the TMLE, under conditions as discussed in this paper, asymptotically achieves variance equal to that of the efficient influence curve, the estimator is asymptotically efficient.\\

As a note to the reader: Our parameter mapping does not depend on the treatment mechanism, $g$, and also $T_{A}\perp T_{Y}\oplus T_{W}$ which, means our efficient influence curve must therefore be in $T_{Y}\oplus T_{W}$ for the nonparametric model.  Therefore, our efficient influence curve will have two orthogonal components in $T_Y$ and $T_W$ respectively. We have no component in $T_A$, which is why we need not perform a TMLE update of the initial prediction, $g_n$, of $g_0(A\vert W)$. Such also teaches us that for the semi-parametric model, where the treatment mechanism is known, the efficient influence function will be the same as for the non-parametric model.\\  

\subsection{\noindent Derivation of the efficient influence curve}
For the below proof, when we take the derivatives we will assume that such is in the direction of a given score, $S\in T$.  We will assume a dominating measure $\nu$ and merely use $\nu$ to always denote the dominating measure of all densities involved. We could perhaps just assume continuous densities and use lebesque measure as well and nothing would change below. \\

\begin{thm}
Let $\Psi(P)=var_{P}(b(W))$.
\noindent The efficient influence curve for $\Psi$ at $P$ is given
by:
\[
\mathbf{D^{\star}(P)(W,A,Y)=}\mathbf{\mathbf{2\left(b(W)-\mathbb{E}b(W)\right)\left(\frac{2A-1}{g(A\vert W)}\right)\left(Y-\bar{Q}(A,W)\right)+\left(\mathbf{b(W)}-\mathbb{E}b\right)^{2}-\varPsi(P)}}
\]
where $\bar{Q}(A,W)=\mathbb{E}(Y\vert A,W)$
\end{thm}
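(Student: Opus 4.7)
The plan is to compute the pathwise derivative $\frac{d}{d\epsilon}\Psi(P_\epsilon)\big|_{\epsilon=0}$ along a one-dimensional submodel $P_\epsilon$ with score $S = S_Y + S_A + S_W \in T_Y \oplus T_A \oplus T_W$, and then identify an element $D^\star(P) \in T_Y \oplus T_W$ such that this derivative equals $\langle D^\star(P), S\rangle_{L^2(P)}$. Because the model is nonparametric and $\Psi$ does not depend on $g$, that element will automatically be the canonical gradient, so there is no need to project. I would begin by writing $\Psi(P_\epsilon) = \mathbb{E}_{P_\epsilon}\bigl[b_{P_\epsilon}(W)^2\bigr] - \bigl(\mathbb{E}_{P_\epsilon} b_{P_\epsilon}(W)\bigr)^2$ and splitting the differentiation into (i) a part where the outer expectation is differentiated (producing an $S_W$ contribution), and (ii) a part where $b_{P_\epsilon}(W)$ itself is differentiated (producing an $S_Y$ contribution, with no $S_A$ contribution because $b$ is a contrast of two conditional means of $Y$).

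For the $S_W$ part, differentiating $\int h(w) p_{W,\epsilon}(w)\,d\nu(w)$ at $\epsilon=0$ with $h(w) = (b(w) - \mathbb{E}b)^2$ yields $\mathbb{E}\bigl[(b(W) - \mathbb{E}b)^2 S_W(W)\bigr]$; since $S_W$ is the $T_W$-component of the full score $S$, and $(b(W)-\mathbb{E}b)^2 - \Psi(P)$ has mean zero, I can rewrite this as $\mathbb{E}\bigl[\bigl\{(b(W)-\mathbb{E}b)^2 - \Psi(P)\bigr\}S(O)\bigr]$, which gives the last two terms of the claimed $D^\star$. For the $S_Y$ part, I would use the standard identity
\begin{equation*}
\frac{d}{d\epsilon}\mathbb{E}_{P_\epsilon}[Y \mid A=a, W=w]\Big|_{\epsilon=0} = \mathbb{E}\bigl[(Y - \bar Q(a,w))\,S_Y(Y\mid a,w) \,\big|\, A=a, W=w\bigr],
\end{equation*}
applied at $a=1$ and $a=0$, to get $\frac{d}{d\epsilon}b_{P_\epsilon}(w)\big|_{\epsilon=0}$.

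Then I would plug this into the derivative of $\mathbb{E}_{P_\epsilon}\bigl[b_{P_\epsilon}(W)^2\bigr] - (\mathbb{E}_{P_\epsilon}b_{P_\epsilon}(W))^2$, which (after collecting the $2(b(w) - \mathbb{E}b)$ factor via the chain rule on $b^2$ minus the $2\mathbb{E}b \cdot \frac{d}{d\epsilon}\mathbb{E}b$ term) leaves a conditional-expectation expression of the form $\mathbb{E}\bigl[f(W)\{\mathbb{E}[(Y-\bar Q(1,W))S_Y \mid A=1,W] - \mathbb{E}[(Y-\bar Q(0,W))S_Y \mid A=0,W]\}\bigr]$ with $f(W) = 2(b(W) - \mathbb{E}b)$. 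The key algebraic move is to rewrite each inner conditional expectation as an unconditional one using the weights $A/g(1\mid W)$ and $(1-A)/g(0\mid W)$, i.e.\
\begin{equation*}
\mathbb{E}\bigl[(Y-\bar Q(a,W))S_Y \,\big|\, A=a, W\bigr] = \mathbb{E}\Bigl[\tfrac{\mathbf{1}\{A=a\}}{g(a\mid W)}(Y-\bar Q(A,W))S_Y(Y\mid A,W) \,\Big|\, W\Bigr],
\end{equation*}
and then combining the two terms into the single ``clever covariate'' $\frac{2A-1}{g(A\mid W)}$. Since $S_Y$ is the $T_Y$-component of $S$ and the factor multiplying it lies in $T_Y$, I can then replace $S_Y$ by $S$ in the outer expectation to obtain $\mathbb{E}\bigl[2(b(W)-\mathbb{E}b)\tfrac{2A-1}{g(A\mid W)}(Y-\bar Q(A,W))\,S(O)\bigr]$.

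Adding the two contributions and reading off the function multiplying $S$ gives exactly the claimed $D^\star(P)$. A quick sanity check at the end: verifying $\mathbb{E}_P D^\star(P) = 0$ (immediate because each piece is mean-zero) and confirming that $D^\star \in T_Y \oplus T_W$, which forces it to be the canonical gradient in the nonparametric model. The main obstacle I expect is the bookkeeping in the chain-rule step for $(\mathbb{E}b)^2$ --- making sure the $-2\mathbb{E}b$ correction combines with the $+2b(W)$ coefficient from differentiating $b^2$ to produce the clean factor $2(b(W)-\mathbb{E}b)$ --- and the clean conversion of the two conditional-expectation terms into the single $\frac{2A-1}{g(A\mid W)}$ weight; everything else is essentially mechanical.
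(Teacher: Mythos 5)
Your proposal is correct and follows essentially the same route as the paper: pathwise differentiation of $\Psi(P_\epsilon)$ along a submodel with score $S$, identification of the $T_W$ contribution as $(b(W)-\mathbb{E}b)^2-\Psi(P)$, and conversion of $\frac{d}{d\epsilon}b_{P_\epsilon}(w)$ into an inner product with $S$ via the weight $\frac{2A-1}{g(A\vert W)}$ and the centering $Y-\bar{Q}(A,W)$, with uniqueness of the gradient in the nonparametric model removing any need to project. The only differences are bookkeeping choices (you expand the variance as $\mathbb{E}[b^2]-(\mathbb{E}b)^2$ where the paper keeps the centered form and notes the $\frac{d}{d\epsilon}\mathbb{E}_{P_\epsilon}b_{P_\epsilon}$ term integrates to zero against $2(b-\mathbb{E}b)$), which lead to the same cancellations.
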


\begin{proof}
\noindent by our previous discussion, we are guaranteed a
unique representer in $L_{0}^{2}(P)$, called $D^{\star}$ such that
\[
\frac{d}{d\epsilon}\Psi(P_{\epsilon})(S)\biggr\vert_{\epsilon=0}=\langle D^{\star},S\rangle=\mathbb{E}D^{\star}S
\]

\noindent We will now write $\frac{d}{d\epsilon}\Psi(P_{\epsilon})(S)\biggr\vert_{\epsilon=0}\text{ as }\langle D^{\star},S\rangle$
and $D^{\star}$ will be our desired formula:

\noindent
\begin{eqnarray}
\frac{d}{d\epsilon}\Psi(P_{\epsilon})(S)\biggr\vert_{\epsilon=0} & = & \frac{d}{d\epsilon}\mathbb{E}_{P_{\epsilon}}\left(b_{P_{\epsilon}}(W)-\mathbb{E}_{P_{\epsilon}}b_{P_{\epsilon}}(W)\right)^{2}\biggr\vert_{\epsilon=0}\nonumber \\
 & = & \frac{d}{d\epsilon}\int\left(b_{P_{\epsilon}}(w)-\mathbb{E}_{P_{\epsilon}}b_{P_{\epsilon}}(W)\right)^{2}p_{\epsilon}(o)\nu(do)\biggr\vert_{\epsilon=0}\nonumber \\
 & = & \int2\left(b_{P_{\epsilon}}(w)-\mathbb{E}_{P_{\epsilon}}b_{P_{\epsilon}}(W)\right)\frac{d}{d\epsilon}\left(b_{P_{\epsilon}}(w)-\mathbb{E}_{P_{\epsilon}}b_{P_{\epsilon}}\right)p(o)\nu(do)\biggr\vert_{\epsilon=0}+\mathbb{E}\left[\left(b(W)-\mathbb{E}b(W)\right)^{2}S(O)\right]\nonumber \\
 & = & \int2\left(b_{P}(w)-\mathbb{E}_{P}b_{P}(W)\right)\frac{d}{d\epsilon}b_{P_{\epsilon}}(w)p(o)\nu(do)\biggr\vert_{\epsilon=0}+\mathbb{E}\left[\left(\left(b(W)-\mathbb{E}b(W)\right)^{2}-\Psi(P)\right)S(O)\right]
\end{eqnarray}

\noindent Now we can compute the first term in (4)

\begin{eqnarray}
 &  & 2\int\left(b_{P_{\epsilon}}(w)-\mathbb{E}_{P_{\epsilon}}b_{P_{\epsilon}}(W)\right)\frac{d}{d\epsilon}\left[\int\left(yp_{Y\epsilon}(y\vert a=1,w)-yp_{Y\epsilon}(y\vert A=0,w)\right)\nu(dy)\right]p_{W}(w)\nu(dw)\biggr\vert_{\epsilon=0}\nonumber \\
 & = & 2\int\left(b_{P_{\epsilon}}(w)-\mathbb{E}_{P_{\epsilon}}b_{P_{\epsilon}}(W)\right)\frac{d}{d\epsilon}\left[\int\frac{2a-1}{g(a\vert w)}p_{Y\epsilon}(y\vert a,w)g(a\vert w)\nu(d(y\times a))\right]p_{W}(w)\nu(dw)\biggr\vert_{\epsilon=0}\nonumber \\
 &  & 2\int\left(b_{P_{\epsilon}}(w)-\mathbb{E}_{P_{\epsilon}}b_{P_{\epsilon}}(W)\right)\left[\int\frac{2a-1}{g(a\vert w)}\frac{d}{d\epsilon}\frac{p_{\epsilon}(w,a,y)}{p_{A\epsilon}(a\vert w)p_{W\epsilon}(w)}\biggr\vert_{\epsilon=0}g(a\vert w)\nu(d(y\times a))\right]p_{W}(w)\nu(dw)\\
 & = & 2\int\left(b_{P_{\epsilon}}(w)-\mathbb{E}_{P_{\epsilon}}b_{P_{\epsilon}}(W)\right)\frac{y(2a-1)}{g(a\vert w)}\left(S(w,a,y) - \int S(w,a,y) p_Y(y \mid w, a) d\nu(y)\right)p(w,a,y)\nu(do))\nonumber \\
 & \overset{fubini}{=} & 2\int\left(b_{P_{\epsilon}}(w)-\mathbb{E}_{P_{\epsilon}}b_{P_{\epsilon}}(W)\right)\int\frac{(2a-1)}{g(a\vert w)}(y-\bar{Q}(a,w))S(w,a,y)p(w,a,y)\nu(d(o))\nonumber \\
 & = & 2\int\left(b_{P_{\epsilon}}(w)-\mathbb{E}_{P_{\epsilon}}b_{P_{\epsilon}}(W)\right)\int\frac{(2a-1)}{g(a\vert w)}(y-\bar{Q}(a,w))S(o)p(o)\nu(d(o))
\end{eqnarray}

\[
2\left(b(W)-\mathbb{E}b(W)\right)\left(\frac{2A-1}{g(A\vert W)}\right)(Y-\bar{Q}(A,W))+\left(b(W)-\mathbb{E}b\right)^{2}-\Psi(P)
\]
is the aforementioned representer, completing the proof.
\end{proof}

\subsection{\noindent Remainder terms and asymptotic linearity:}
The reader may recall the three conditions assuring asymptotic efficiency of the TMLE estimator.  Here we will focus on the $2^{nd}$ condition regarding the remainder term.

\begin{thm}
If $P_0$ is the true distribution, it is necessary to estimate the true blip function $b_0$ at a rate of $\frac{1}{n^{0.25}}$ in the $L^{2}(P)$ norm in order for TMLE to be a consistent asymptotically efficient estimator under a known treatment mechanism, $g_0$. If $g_0$ is unknown, we also need the product of the $L^{2}(P_0)$ rates for estimating $g_0$ and $\bar{Q}_0$ to be $\frac{1}{n^{0.5}}$.
\end{thm}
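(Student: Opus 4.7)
The plan is to read off the required rates directly from the three summands of the remainder term already displayed in the text, which I will label (2), (3), and (4) following the paper's numbering. The overall goal is to force $R_2(P_n^*,P_0) = o_P(n^{-1/2})$, since that is condition 2 of Section 2.2.1 needed for $\sqrt n(\Psi(P_n^*) - \Psi(P_0))$ to equal $\sqrt n (P_n - P_0) D^\star(P_0) + o_P(1)$. I would treat the three summands separately and combine them at the end.

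First, for the cross term (3), I would apply Cauchy-Schwarz in $L^2(P_0)$. Positivity gives a uniform lower bound on $g$, and $b_n^* - \mathbb{E} b$ is bounded since $Y \in [0,1]$, so the integrand in (3) is pointwise dominated by a constant times the product $|g_0 - g_n||\bar{Q}_0 - \bar{Q}_n^*|$ across both arms $A=0,1$. This gives the product upper bound (5), namely $\|\bar{Q}_n^* - \bar{Q}_0\|_{L^2(P_0)} \|g_n - g_0\|_{L^2(P_0)}$. Writing the two factor rates as $n^{r_{\bar{Q}}}$ and $n^{r_g}$, the condition $r_{\bar{Q}} + r_g \le -1/2$ emerges, establishing the unknown-$g_0$ half of the theorem. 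When $g_0$ is known and plugged in, the entire term (3) is identically zero.

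Second, for term (4), which is $-\|b_n^* - b_0\|^2_{L^2(P_0)}$, no sign cancellation is available because the integrand is a square. Demanding this quantity be $o_P(n^{-1/2})$ is equivalent to $\|b_n^* - b_0\|_{L^2(P_0)} = o_P(n^{-1/4})$. Since knowledge of $g_0$ cannot touch this term, the $n^{-1/4}$ rate on the blip estimator is required regardless of whether the treatment mechanism is known, which is precisely the non-double-robust statement in the theorem. Term (2), $(\mathbb{E}_0 b_0 - \mathbb{E} b_n^*)^2$, can be rewritten as an integral of $\bar{Q}_0 - \bar{Q}_n^*$ against $P_0$ and, after replacing the empirical mean $\mathbb{E} b_n^*$ with $\mathbb{E}_{P_0} b_n^*$ at cost $O_P(n^{-1/2})$ (justified either by the Donsker condition for TMLE or by the CV construction for CV-TMLE), is controlled by $\|b_n^* - b_0\|^2_{L^2(P_0)}$ via Jensen. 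So the rate already obtained for (4) dominates (2).

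The main obstacle is clearly term (4): it is quadratic in $b_n^* - b_0$ without any compensating factor of $g_n - g_0$, so no product structure can be exploited, and this is exactly the reason VTE estimation fails to be doubly robust. The proof would conclude by adding the three bounds, noting that $r_{\bar{Q}} \le -1/4$ (required by (4)) combined with any $r_g \le -1/4$ suffices when $g_0$ is unknown, while under known $g_0$ only the condition $\|b_n^* - b_0\|_{L^2(P_0)} = o_P(n^{-1/4})$ remains. A brief verification that this $L^2$ rate on $b_n^*$ is equivalent to a matching $L^2$ rate on $\bar{Q}_n^*$ in each treatment arm (via the identity $b_n^* = \bar{Q}_n^*(1,\cdot) - \bar{Q}_n^*(0,\cdot)$ and the triangle inequality) closes the argument.
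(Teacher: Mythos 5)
Your rate analysis is correct and, for the two substantive terms, coincides with the paper's own: the cross term is bounded exactly as you describe (boundedness of $b_n^{*}-\mathbb{E}b$, positivity to lower-bound $g$, then Cauchy--Schwarz to obtain $K\Vert g_{0}-g\Vert_{L^{2}(P_{0})}\Vert\bar{Q}_{0}-\bar{Q}\Vert_{L^{2}(P_{0})}$, vanishing identically under known $g_0$), and the squared blip term $-\mathbb{E}_{0}(b_{0}-b)^{2}$ is identified in both arguments as the non-doubly-robust obstruction forcing $\Vert b_n^{*}-b_{0}\Vert_{L^{2}(P_{0})}=o_{P}(n^{-1/4})$. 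You diverge on the term $(\mathbb{E}_{0}b_{0}-\mathbb{E}b)^{2}$: the paper disposes of it by invoking the double robustness of the TMLE for the causal risk difference, so that $\sqrt{n}(\mathbb{E}_{0}b_{0}-\mathbb{E}b)$ is asymptotically normal under the product-rate hypothesis and the square is $O_{P}(1/n)$ by Slutsky; you instead recentre the empirical mean at cost $O_{P}(n^{-1/2})$ and apply Jensen to dominate it by the squared blip term. Your route is more self-contained in that it imports no external ATE result, but it leans on the Donsker (or cross-validation) condition for the recentring step, which is a separate TMLE condition; the paper's route yields the sharper $O_{P}(1/n)$ bound from a hypothesis it already assumes for the cross term. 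Both are valid. The one real omission: the paper's proof of this theorem consists mostly of \emph{deriving} the three-term decomposition from $R_{2}(P,P_{0})=\Psi(P)-\Psi(P_{0})+P_{0}D^{\star}(P)$ --- integrating out $Y$ by the tower property, splitting the inverse-weighted term over the two treatment arms, and completing the square in $b-b_{0}$ --- whereas you take that identity as given from Section 2.2.2. If the theorem is to stand on its own, that algebra is the bulk of the work and would need to be supplied.
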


\begin{proof}

\noindent For this discussion we will drop the subscript, n, and superscript, $\star$ in $P_{n}^{\star}$
and merely consider, $P$, as an estimate of the truth, $P_0$. We will use $b(W)$ to denote the TE function where the conditional expectation is with respect to distribution, $P$, ie the estimated TE function, and $b_0(W)$ to be the true TE function. Likewise, $\mathbb{E}_0$ is the expectation with respect to the true observed data distribution, $P_0$, and leaving the subscript, $0$, off the expectation sign means the expectation is with respect to $P$.  

\noindent {\scriptsize{}
\begin{eqnarray}
R_{2}(P,P_{0}) & = & \Psi(P)-\Psi(P_{0})+P_{0}\left(D^{\star}(P)\right)\nonumber \\
 & = & \mathbb{E}\left(b(W)-\mathbb{E}b(W)\right)^{2}-\mathbb{E}_{0}\left(b_{0}(W)-\mathbb{E}_{0}b_{0}(W)\right)^{2}+\nonumber \\
 &  & \mathbb{E}_{0}\left[2\left(b(W)-\mathbb{E}b(W)\right)\frac{2A-1}{g(A\vert W)}\left(Y-\bar{Q}(A,W\right)+\left(b(W)-\mathbb{E}b(W)\right)^{2}-\Psi(P)\right]\nonumber \\
 & = & -\mathbb{E}_{0}\left(b_{0}(W)-\mathbb{E}_{0}b_{0}(W)\right)^{2}+\nonumber \\
 &  & \mathbb{E}_{0}\left[2\left(b(W)-\mathbb{E}b(W)\right)\frac{2A-1}{g(A\vert W)}\left(Y-\bar{Q}(A,W)\right)+\left(b(W)-\mathbb{E}b(W)\right)^{2}\right]\nonumber \\
 & = & \mathbb{E}_{0}\left[\left(b(W)-\mathbb{E}b(W)\right)^{2}-\left(b_{0}(W)-\mathbb{E}_{0}b_{0}(W)\right)^{2}\right]+\nonumber \\
 &  & \mathbb{E}_{0}\mathbb{E}_{0}\left[2\left(b(W)-\mathbb{E}b(W)\right)\frac{2A-1}{g(A\vert W)}\left(\bar{Q}_{0}(A,W)-\bar{Q}(A,W)\right)\vert W\right]\nonumber \\
 & = & \mathbb{E}_{0}\left[\left(b(W)-\mathbb{E}b(W)\right)^{2}-\left(b_{0}(W)-\mathbb{E}_{0}b_{0}(W)\right)^{2}\right]+\nonumber \\
 &  & +\mathbb{E}_{P_{W}}\left[2\left(b(W)-\mathbb{E}b(W)\right)\left(\frac{g_{0}(1\vert W)}{g(1\vert W)}\left(\bar{Q}_{0}(1,W)-\bar{Q}(1,W)\right)-\frac{g_{0}(0\vert W)}{g(0\vert W)}\left(\bar{Q}_{0}(0,W)-\bar{Q}(0,W)\right)\right)\right]\nonumber \\
 & = & \mathbb{E}_{0}\left[\left(b(W)-\mathbb{E}b(W)\right)^{2}-\left(b_{0}(W)-\mathbb{E}_{0}b_{0}(W)\right)^{2}+2\left(b_{0}(W)-b(W)\right)\left(b(W)-\mathbb{E}b(W)\right)\right]\nonumber \\
 &  & +\mathbb{E}_{0}\left[2\left(b(W)-\mathbb{E}b(W)\right)\left(\frac{g_{0}(1\vert W)-g(1\vert W)}{g(1\vert W)}\left(\bar{Q}_{0}(1,W)-\bar{Q}(1,W)\right)-\frac{g_{0}(0\vert W)-g(0\vert W)}{g(0\vert W)}\left(\bar{Q}_{0}(0,W)-\bar{Q}(0,W)\right)\right)\right]\nonumber\\
 & = & \left(\mathbb{E}_{0}b_{0}(W)-\mathbb{E}b(W)\right)^{2}-\mathbb{E}_{0}\left(b_{0}(W)-b(W)\right)^{2}\\
 &  & +\mathbb{E}_{0}\left[2\left(b(W)-\mathbb{E}b(W)\right)\left(\frac{g_{0}(1\vert W)-g(1\vert W)}{g(1\vert W)}\left(\bar{Q}_{0}(1,W)-\bar{Q}(1,W)\right)-\frac{g_{0}(0\vert W)-g(0\vert W)}{g(0\vert W)}\left(\bar{Q}_{0}(0,W)-\bar{Q}(0,W)\right)\right)\right]\nonumber
\end{eqnarray}
}{\scriptsize \par}

We can regard the $\left(\mathbb{E}_{0}b_{0}(W)-\mathbb{E}b(W)\right)^{2}$
term in (6) and notice that for an unknown, $g_{0}$, it is well-known that
the double robustness of TMLE in estimating the causal risk difference,
$\mathbb{E}_{0}b_{0}(W)$, implies that if we estimate both $g_{0}$
and $\bar{Q}_{0}$ so that the product of the respective $L_{2}$
rates of convergence is $o(n^{-0.5})$, then we obtain $\sqrt{n}\left(\mathbb{E}_{0}b_{0}(W)-\mathbb{E}b(W)\right)\overset{D}{\implies}N\left[0,var_{0}(D_{1}^{\star}(P_{0}))\right]$
where $D_{1}^{\star}(P_{0})$ is the efficient influence curve for
the causal risk difference. We therefore know $\mathbb{E}_{0}b_{0}(W)-\mathbb{E}b(W)\overset{p}{\longrightarrow}0$
and by slutsky's theorem, $\sqrt{n}\left(\mathbb{E}_{0}b_{0}(W)-\mathbb{E}b(W)\right)^{2}\overset{D}{\implies}0$.
Therefore this term poses no additional problem to the rest of the terms. \\

Now we can address the standard "double robust" term:
\begin{footnotesize}
\begin{align*}
 & \mathbb{E}_{0}\left[2\left(b(W)-\mathbb{E}b(W)\right)\left(\frac{g_{0}(1\vert W)-g(1\vert W)}{g(1\vert W)}\left(\bar{Q}_{0}(1,W)-\bar{Q}(1,W)\right)-\frac{g_{0}(0\vert W)-g(0\vert W)}{g(0\vert W)}\left(\bar{Q}_{0}(0,W)-\bar{Q}(0,W)\right)\right)\right]\\
\leq & K\mathbb{E}_{0}\left[\biggr\vert\frac{g_{0}(1\vert W)-g(1\vert W)}{g(1\vert W)}\left(\bar{Q}_{0}(1,W)-\bar{Q}(1,W)\right)\biggr\vert+\biggr\vert\frac{g_{0}(0\vert W)-g(0\vert W)}{g(0\vert W)}\left(\bar{Q}_{0}(0,W)-\bar{Q}(0,W)\right)\biggr\vert\right]\\
\leq & K\mathbb{E}_{0}\biggr\vert\frac{g_{0}(A\vert W)-g(A\vert W)}{g(A\vert W)g_{0}(A\vert W)}\left(\bar{Q}_{0}(A,W)-\bar{Q}(A,W)\right)\biggr\vert\\
\leq & K\Vert g_{0}(A\vert W)-g(A\vert W)\Vert_{L^{2}(P_{0})}\Vert\bar{Q}_{0}(A,W)-\bar{Q}(A,W)\Vert_{L^{2}(P_{0})}
\end{align*}
\end{footnotesize}
where the last inequality follows from cauchy-schwarz and the strict positivity assumption on $g_{0}$. The difficult term in (6) is $\mathbb{E}_{0}\left(b_{0}(W)-b(W)\right)^{2}$ but if we obtain the required $L^2$ rates it is obvious this term will be second order and we have proven the theorem.  
\end{proof}

\section{Logistic Regression Plug-in Estimator Inference}

\subsection{defining the parameter of interest}
Let us consider the nonparametric model, $\mathcal{M}$.  We will employ basic statistics to obtain the influence curve for a logistic regression plug-in estimator for both the mean and variance of the blip function, $b(W) = \mathbb{E}[Y \mid A=1, W] -  \mathbb{E}[Y \mid A=0, W]$. We define the plug-in estimator as plugging in the outcome conditional density given by the MLE for $\beta$ where 
\[
\beta = \underset{\gamma}{argmin}[-\mathbb{E}_{P_{A,W}}\mathbb{E}_{P_{\gamma, Y \mid A,W}}log(p_{\gamma}(Y \mid A,W)p_{A,W}(A,W))]
\] 
$Y$ and $A$ are binary and $P_{\gamma}(Y \mid A,W)$ is defined as a conditional density, 
\[
p_{\gamma}( Y \mid A,W) = expit\left(m(A,W\vert\gamma)\right)^Y(1-expit\left(m(A,W\vert\gamma)\right)^{1-Y}
\]
for a fixed function $m(\cdot \vert \cdot)$.

$\Psi_1(P) = \mathbb{E}_{W}b_\beta(W)$ and 

$\Psi_2(P) = \mathbb{E}_{W}(b_\beta(W) - \Psi_1(P))^2$

and we may consider the two dimensional parameter: $\Psi(P) = (\Psi_1(P), \Psi_2(P))$.

We note to the reader, that $b_\beta(W) = expit\left(m(1,W\vert\beta)\right) - expit\left(m(0,W \vert\beta)\right)$ is the conditional average treatment effect under the parametric model for strata, $W$.  

\subsection{Finding the MLE for the coefficients, $\beta$}
Now if we take n iid draws of $O$ we get that the likelihood of drawing
$\{O_{i}\}_{i=1}^{n}$ is 
\[
\prod_{i=1}^{n}expit\left(m(A_{i},W_{i}\vert\beta\right)^{Y_{i}}\left(1-expit\left(m(A_{i},W_{i}\vert\beta\right)\right)^{1-Y_{i}}p_{A}(A_{i}\vert W_{i})p_{W}(W_{i})
\]

We thus have:
\begin{eqnarray}
\bigtriangledown_{\beta}log\prod_{i=1}^{n}expit\left(m(A_{i},W_{i}\vert\beta\right)^{Y_{i}}\left(1-expit\left(m(A_{i},W_{i}\vert\beta\right)\right)^{1-Y_{i}}p_{A}(A_{i}\vert W_{i})p_{W}(W_{i}) & =\nonumber \\
\nabla_{\beta}\sum_{i=1}^{n}\left[Y_{i}log\left(expit\left(m(A_{i},W_{i}\vert\beta\right)\right)+(1-Y_{i})log\left(1-expit\left(m(A_{i},W_{i}\vert\beta\right)\right)\right] & =\\
\sum_{i=1}^{n}\left(\frac{\partial m}{\partial\beta_{0,n}},\frac{\partial m}{\partial\beta_{1,n}},...,\frac{\partial m}{\partial\beta_{d,n}}\right)^{T}\left(Y_{i}-expit\left(m(A_{i},W_{i}\vert\beta_{n}\right)\right) & = 0
\end{eqnarray}

We can now derive the multidimensional influence function via the use of a taylor series about $\mathbb{E}_{P_{\beta}}S_{\beta}(O)$
where 
\[
S_{\beta}(O)=\left(\frac{\partial m}{\partial\beta_{0}},\frac{\partial m}{\partial\beta_{1}},...,\frac{\partial m}{\partial\beta_{d}}\right)^{T}\left(Y-expit\left(m(A,W\vert\beta\right)\right)
\]

Also note that the derivative of the log-likelihood or score, $S_{\beta}(O)$, has mean 0 by assumption.
So $P_{\beta}S_{\beta}(O)=\mathbb{E}_{P_{\beta}}S_{\beta}(O)=0$. 

Thus we get by virtue of $\beta_n$ being the MLE: {\scriptsize{}
\begin{eqnarray}
P_{n}S_{\beta_{n}}(O)-P_{\beta}S_{\beta}(O) & = & 0\nonumber \\
P_{n}S_{\beta_{n}}(O)-P_{\beta}S_{\beta}(O)+P_{\beta}S_{\beta_{n}}(O)-P_{\beta}S_{\beta_{n}}(O) & = & 0\nonumber \\
(P_{n}-P_{\beta})S_{\beta_{n}}(O) & = & P_{\beta}\left(S_{\beta}(O)-S_{\beta_{n}}(O)\right)\nonumber \\
\sqrt{n}(P_{n}-P)S_{\beta_{n}}(O) & = & -\sqrt{n}P_{\beta}\left(\nabla_{\beta}S_{\beta}(O)\right)(\beta_{n}-\beta)+\sqrt{n}O_{p}\Vert\beta_{n}-\beta\Vert^{2}\\
\implies\sqrt{n}(\beta_{n}-\beta) & \overset{D}{\implies} & \sqrt{n}(P_{n}-P_{\beta})\left(-P_{\beta}\left(\nabla_{\beta}S_{\beta}(O)\right)^{-1}S_{\beta}(O)\right)
\end{eqnarray}
}{\scriptsize \par}

since we can consider the $\Vert \beta_n-\beta \Vert ^2$ term as second order.  Therefore 

\[
IC_{\beta_n}(O) = \left(-P_{\beta}\left(\nabla_{\beta}S_{\beta}(O)\right)^{-1}S_{\beta}(O)\right)
\]

is the influence curve for the maximum likelihood estimator of the truth, $\beta$. In the case of logistic regression we have $m(A,W\vert\beta)=X(A,W)^T\beta$ where we might have the main terms linear case, $X(A,W)^T=(1,A,W)^{T}$ and we consider $\beta$ as a column vector of coefficients, including the intercept, $\beta_{0}.$  However, $X(A,W)^T$ might be any combination of columns of the covariates, as in any of variables, containing interactions and so forth of the main terms in the right-hand side of our regression formula. For now we will drop the arguments in X(A,W) and just use X unless it is necessary. 

\[
S_{\beta}(O)=X\left(Y-expit(\beta^T X)\right)
\]

\[
\nabla_{\beta}S_{\beta}(O)=\left[\begin{array}{c}
\nabla_{\beta}^{T}S_{\beta,0}\\
.\\
\nabla_{\beta}^{T}S_{\beta,d}
\end{array}\right]=expit(\beta^T X)(1-expit(\beta^T X))XX^{T}
\]

\subsection{Influence curve for MLE esimate of $b_{\beta}(W_0)$}
Consider the parameter $\Psi_{a,w}(P)=expit\left(m(a,w\vert\beta)\right)$ for fixed $(a, w)$. $expit\left(m(a,w\vert\beta\right)$ is a continuously differentiable function of $\beta,$ which means we can apply the ordinary delta method as follows to find the plug-in estimator influence curve estimating $\Psi_{a,w}(P)$. 

\begin{align*}
\bar{Q}_{\beta_{n}}(a,w)-\bar{Q}_{\beta}(a,w) & =\nabla_{\beta}^{T}\bar{Q}_{\beta}(a,w)\sum_{i=1}^{n}IC_{\beta_{n}}(O_i)+R_{2}(P_{\beta},P_{\beta_{n}})\\
 & = expit(\beta^{T}x)(1-expit(\beta^{T}x))x^{T}\sum_{i=1}^{n}IC_{\beta_n}(O_i)+R_{2}(P_{\beta},P_{\beta_{n}})
\end{align*}

where $x = X(a,w)$ and $R_{2}=o_{p}(n^{-.5})$.\\

We thus have the influence curve for the logistic regression MLE plug-in estimator for $b_\beta(W_0)(O)$, notated as $IC_{b_{\beta_n}(W_0)}(O)$:

\begin{scriptsize}
\begin{eqnarray*}
IC_{b_{\beta_n}(W_0)}(O) & =\\
\left[expit\left(\beta^{T}X(1,W_{0})\right)\left(1-expit\left(\beta^{T}X(1,W_{0})\right)\right)X(1,W_{0})^{T}-expit\left(\beta^{T}X(0,W_{0})\right)\left(1-expit\left(\beta^{T}X(0,W_{0})\right)\right)X(0,W_{0})^{T}\right]IC_{\beta_{n}}(O) & =\\
f_{\beta}(W_{0})IC_{\beta_{n}}(O)
\end{eqnarray*}
\end{scriptsize}

\subsection{Influence curve for MLE estimate of $b_\beta(W_0)^2$}
by the delta method we easily get

\[
IC_{b_{\beta_n}(W_0)^2}=2b_\beta(W_0)IC_{b_\beta(W_0)}(O)
\]

\subsection{Telescoping to find the 2-d influence curve for 2-d parameter $\Psi(P)$}

Note, that $\Psi_n = (\Psi_{1,n},\Psi_{2,n})$ is the MLE plug-in estimate for $\Psi(P) = (\Psi_{1},\Psi_{2})$.\\

\begin{eqnarray*}
\Psi_{1,n}-\Psi_{1} & = & \frac{1}{n}\sum_{i=1}^{n}\left(b_{\beta_{n}}(W_{i})-\Psi_{1}\right)\\
 & = & \frac{1}{n}\sum_{i=1}^{n}\left(b_{\beta_{n}}(W_{i})-b_{\beta}(W_{i})\right)+\underbrace{\frac{1}{n}\sum_{i=1}^{n}b_{\beta}(W_{i})-\Psi_{1}}_{\text{set aside}}
\end{eqnarray*}

Now we can ignore the terms that are set aside as they are part of
the influence curve in the tangent space of mean 0 functions of $W$. 

\begin{eqnarray*}
\frac{1}{n}\sum_{i=1}^{n}\left(b_{\beta_{n}}(W_{i})-b_{\beta}(W_{i})\right) & =\\
\frac{1}{n}\sum_{i=1}^{n}\left(\frac{1}{n}\sum_{j=1}^{n}IC_{b_{\beta_{n}}(W_{i})}(O_{j})\right)+o_{p}(n^{-0.5}) & =\\
\frac{1}{n}\sum_{i=1}^{n}f_{\beta}(W_{i})\left(\frac{1}{n}\sum_{j=1}^{n}IC_{\beta_{n}}(O_{j})\right)+o_{p}(n^{-0.5}) & =\\
Pf_{\beta}(W)\left(\frac{1}{n}\sum_{j=1}^{n}IC_{\beta_{n}}(O_{j})\right)+\underbrace{\left(P_{n}-P\right)f_{\beta}(W)\left(\frac{1}{n}\sum_{j=1}^{n}IC_{\beta_{n}}(O_{j})\right)}_{o_{p}(n^{-0.5})}+o_{p}(n^{-0.5}) & =\\
\mathbb{E}f_{\beta}(W)\left(\frac{1}{n}\sum_{j=1}^{n}IC_{\beta_{n}}(O_{j})\right)+o_{p}(n^{-0.5})
\end{eqnarray*}

Therefore, the influence for the MLE-based estimate of $\Psi_{1}(P),$denoted
by $IC_{\Psi_{1,n}}$, is 
\[
\mathbf{IC_{\Psi_{1,n}}(O)=\mathbb{E}f_{\beta}(W)IC_{\beta_{n}}(O)}
\]

\begin{eqnarray*}
\Psi_{2,n}-\Psi_{2} & = & \frac{1}{n}\sum_{i=1}^{n}\left(b_{\beta_n}(W_{i})-\Psi_{1,n}\right)^{2}-\Psi_{2}\\
 & = & \frac{1}{n}\sum_{i=1}^{n}\left[\left(b_{\beta_n}(W_{i})-\Psi_{1,n}\right)^{2}-\left(b_\beta(W_{i})-\Psi_{1}\right)^{2}\right]+\underbrace{\frac{1}{n}\sum_{i=1}^{n}\left(b_\beta(W_{i})-\Psi_{1}\right)^{2}-\Psi_{2}}_{\text{set aside}}
\end{eqnarray*}

regarding the terms not set aside:

\begin{eqnarray*}
\frac{1}{n}\sum_{i=1}^{n}\left[\left(b_{\beta_n}(W_{i})-\Psi_{1,n}\right)^{2}-\left(b_\beta(W_{i})-\Psi_{1}\right)^{2}\right] & =\\
\frac{1}{n}\sum_{i=1}^{n}b_{\beta_n}(W_{i})^{2}-b_\beta(W_{i})^{2}-\left(\frac{1}{n}\sum_{i=1}^{n}b_{\beta_n}(W_{i})\right)^{2}+\Psi_{1}^{2} & =\\
\frac{1}{n}\sum_{i=1}^{n}\frac{1}{n}\sum_{j=1}^{n}IC_{b_\beta(W_{i})}(O_j)-\frac{2}{n}\Psi_{1}\sum_{i=1}^{n}IC_{\Psi_{1}}(O_{i})+o_{p}(n^{-0.5}) & =\\
\frac{1}{n}\sum_{i=1}^{n}2b_\beta(W_{i})f_{\beta}(W_{i})\frac{1}{n}\sum_{j=1}^{n}IC_{\beta_{n}}(O_j)-\frac{2}{n}\Psi_{1}\sum_{i=1}^{n}IC_{\Psi_{1}}(O_{i})+o_{p}(n^{-0.5}) & =\\
P\left(b_\beta(W)f_{\beta}(W)\right)\frac{2}{n}\sum_{j=1}^{n}IC_{\beta_{n}}(O_j)-\frac{2}{n}\Psi_{1}\sum_{i=1}^{n}IC_{\Psi_{1}}(O_{i})+o_{p}(n^{-0.5}) + & \\
\underbrace{\left[\frac{1}{n}\sum_{i=1}^{n}2b_\beta(W_{i})f_{\beta}(W_{i})-P\left(b_\beta(W)f_{\beta}(W)\right)\right]}_{o_{p}(n^{-0.5})}\frac{2}{n}\sum_{j=1}^{n}IC_{\beta_{n}}(O_j) & =\\
P\left(b_\beta(W)f_{\beta}(W)\right)\frac{2}{n}\sum_{j=1}^{n}IC_{\beta_{n}}(O_{i})-\frac{2}{n}\Psi_{1}\sum_{i=1}^{n}IC_{\Psi_{1}}(O_{i})+o_{p}(n^{-0.5})
\end{eqnarray*}
\begin{align*}
IC_{\Psi_{2,n}}(O) & =2\mathbb{E}\left[b_{\beta}(W)f_{\beta}(W)\right]IC_{\beta_{n}}(O)-2\Psi_{1}IC_{\Psi_{1}}(O)\\
 & =2\mathbb{E}\left[\left(b_{\beta}(W)-\Psi_{1}\right)f_{\beta}(W)\right]IC_{\beta_{n}}(O)
\end{align*}

Thus we arrive at the following influence curve for the plug-in estimator of the
two dimensional parameter $(\Psi_1(P), \Psi_2(P))$.  Note, this IC below is written as a sum of two 2-dimensional vectors, one for the components of the IC in $T_Y$ and the other for the components in $T_W$.  
\[ \mathbf{IC_{\Psi_n}(O) =
\mathbb{E}\left(f_{\beta}(W),2(b_\beta(W)-\Psi_1(P))f_{\beta}(W)\right)IC_{\beta_n}(O)+
\left(b_\beta(W) - \Psi_1(P), (b_\beta(W) - \Psi_1(P))^2 - \Psi_2(P)\right)}
\]

\section{Canonical Least Favorable Submodels}

\subsection*{Introduction}
We offer a new way to construct a targeted maximum likelihood estimator for multidimensional parameters via defining the canonical least favorable submodel (clfm) \parencite{clfm}.   A more detailed version of this section by Jonathan Levy is on \href{https://arxiv.org/abs/1811.01261}{arxiv.org}, where we provide a precise general algorithm currently implemented \parencite{gentmle2}.  For the purpose of this appendix we will just show the reader how the clfm leads very naturally to the one-step TMLE \parencite{Laan:2015ab} and can be viewed as iterative version of the one-step TMLE.  Other possible advantages of using the clfm are in Levy, 2018.  Using clfm's has yet to be explored fully.
 
\subsection{Mapping $P_{n}^{0}$ to $P_{n}^{\star}$: The Targeting Step}
Here we will provide an alternate and simple construction of the universal least favorable submodel employed in this paper via the clfm.  

\begin{defn}
We can define a canonical 1-dimensional locally least favorable submodel (clfm) of an estimate, $P_{n}^{0}$, of the true distribution as
\begin{equation}
\{P_{n, \epsilon}^0 \text{ s.t } \frac{d}{d\epsilon}P_{n}L(P_{n, \epsilon}^0)\biggr\vert_{\epsilon=0}=\Vert P_{n} D^{\star}(P_{n}^{0})\Vert_{2}, \epsilon \in [-\delta,\delta]\} 
\end{equation}

where $P_{n, \epsilon}^0 = P_n^0$ and $\Vert \cdot \Vert_{2}$ is the euclidean norm.  We consider a $d-dimensional$ parameter mapping $\Psi:\mathcal{M}\longrightarrow \mathbb{R}^{d}$.
\end{defn} 

This definition differs from the locally least favorable submodel (lfm) in van der Laan and Gruber, 2016, where the empirical mean of the loss spans the efficient influence curve.   Here we can define a clfm with only a single epsilon where as with an lfm, we use an $\epsilon$ of dimension of minimum the dimension of the parameter of interest.  

\begin{defn}
A Universal Least Favorable Submodel (ulfm) of $P_{n}^{0}$ satisfies
\[
\frac{d}{d\epsilon}P_{n}L(P_{n}^{\epsilon})=\Vert P_{n}D^{\star}(P_{n}^{\epsilon})\Vert_{2}\text{ }\forall\epsilon\in(-\delta,\delta)
\]
and naturally, $P_n^{\epsilon = 0} = P_n^0$.  
\end{defn}

We can construct the universal least favorable submodel (ulfm) in terms of the clfm if we use the difference equation $P_{n}(L(P_{n,dt}^{0})-L(P_{n}^{0}))  \approx  \Vert P_{n}D^{\star}(P_{n}^{0})\Vert_{2}dt$, where $P_{n}^{dt} = P_{n, dt}^{0}$ is an element of the clfm of $P_n^0$. More generally, we can map any partition $t=m\times dt$ for an arbitrarily small, $dt$, to an equation $P_n(L(P_{n}^{t+dt})-L(P_{n}^{t}))  \approx  \Vert P_{n}D^{\star}(P_{n}^{t})\Vert_{2}dt$, where $P_{n}^{t+dt}$ is an element of the clfm of $P_{n}^{t}$. We therefore can recursively define the integral equation: $P_{n}(L(P_{n}^{\epsilon})-L(P_{n}^{0}))=\int_{0}^{\epsilon}\Vert P_{n}D^{\star}(P_{n}^{t})\Vert_{2}dt$ and  $P_n^\epsilon$ will thusly be an element of the ulfm of $P_n^0$.  For log likelihood loss, which is valid for both continuous outcome scaled between 0 and 1 as well as binary outcomes, an analytic formula for a ulfm of distribution with density, $p$, is therefore defined by the density $p_\epsilon=p\times exp(\int_0^\epsilon \Vert D^*(P^t) \Vert_2 dt)$ \parencite{Laan:2015ab} where $P^{t+dt}$ is an element of the clfm of $P^{t}$. \\

In applying the one-step TMLE, when the empirical loss is minimized at a given $\epsilon$, we will have solved, $\Vert P_{n}D^{\star}(P_{n}^{\epsilon})\Vert_{2}=0$. Therefore, the loss is decreased and all influence curve equations are solved simultaneously with a single $\epsilon$ in one step. Specifically, $P_{n}D^{\star}_{j}(P_{n}^{\star})=0$ for all $j$. Thus $P_{n}^{\star}=P_{n}^{\epsilon}$ and we have defined the required TMLE mapping.   

\section{An Easy Implementation of CV-TMLE}
\subsection*{Introduction}
The original formulation and theoretical results of cross-validated targeted maximum likelihood estimators, CV-TMLE \parencite{Zheng:2010aa}, leads to an algorithm for the CV-TMLE that generally requires 10 targeting steps for each of 10 validation folds for each iteration in an iterative targeted maximum likelihood estimators or TMLE \parencite{Laan:2006aa}.  Such can be done in one regression, which solves the efficient influence curve equation averaged over the validation folds.  However, in this pooled regression, we must keep track of the means used in each fold, making the process different than a regular TMLE, once the initial predictions have been formed.  The formulation of the CV-TMLE here-in leads to a simpler implementation of the targeting step in that the targeting step can be applied identically as for a regular TMLE once the initial estimates for each validation fold have been computed. The CV-TMLE as discussed here is currently implemented in the R software package of tlverse \parencite{sl3}. \\ 

\subsection{CV-TMLE Definition for General Estimation Problem}
We refer the reader to the following sources \parencite{Laan:2015aa,Laan:2015ab, Laan:2006aa, Laan:2011aa} for a more detailed look at the theory of TMLE and Zheng and van der Laan, 2010 for theory regarding CV-TMLE.  We consider iid data of the form $O\sim P \in \mathcal{M}$, nonparametric or semiparametric model and parameter mapping 

\[
\Psi(Q(\cdot)):\mathcal{M}\longrightarrow \mathbb{R}^d 
\]

Where $Q(P)$ is a model upon which the parameter depends.  If we consider $O=(W,A,Y)$ with outcome, $Y$, and treatment and covariates, $A$ and $W$, then the outcome model $\bar{Q}(A,W) = E_P[Y \mid A, W]$ and distribution of $W$, $Q_W$, would define $Q(P)$. We consider the canonical least favorable submodel (reference myself) of model estimate $\hat{Q}(P_n)$ defined with one-dimensional $\epsilon$:

\[
\frac{d}{d\epsilon}L\left(\hat{Q}(P_n)(\epsilon)\right)\biggr\vert_{\epsilon=0} = \Vert D^*\left(\hat{Q}(P_n), \hat{g}(P_n)\right) \Vert_2
\]

This definition coincides with the least favorable submodel if the $d=1$ because in that case we will have
\[
\langle \frac{d}{d\epsilon}L\left(\hat{Q}(P_n)(\epsilon)\right)\biggr\vert_{\epsilon=0} \rangle  \supset \langle D^*\left(\hat{Q}(P_n), \hat{g}(P_n)\right) \rangle
\]

where the above $\Vert \cdot \Vert_2$ is the euclidean norm.  We then define a mapping $B_n \in {0,1}^n$ to be a random split of ${1,..,n}$.  The training set is defined as $\mathcal{T}=\{i : B_n(i)=0\}$ and the validation set, $\mathcal{V}=\{i : B_n(i)=1\}$.  As in Zheng 2010, $P_{n,B_n}^0$ and $P_{n,B_n}^1$ and the empirical distributions over $\mathcal{T}$ and $\mathcal{V}$ respectively. \\

The CV-TMLE estimator as in Zheng and van der Laan, 2010 is defined as
\[
\Psi^{k_n}(P_n) = E_{B_n} \Psi \left(\hat{Q}(P_{n,B_n}^0)(\overset{\rightarrow}{\epsilon_n}^{k_n})\right)
\]

where $\Psi  \left(\hat{Q}(P_{n,B_n}^0)(\overset{\rightarrow}{\epsilon_n}^{k_n})\right)$ is the plug-in estimator (usually an average of the plugged-in model over the validation set).  $\overset{\rightarrow}{\epsilon_n}^{k_n}$ denotes the kth iteration of fluctuation parameters, where $k$ could always be 1 if we use the one-step TMLE \parencite{Laan:2015ab}.  

\subsection{Illustrative Example, VTE}
We will now go through the CV-TMLE algorithm for the VTE, variance of treatment effect.  Here, we notice that we never target the distribution of $W$, but rather use the unbiased estimator, the empirical distribution.  This is discussed in Zheng and van der Laan, 2010 so refer the reader there for more detail as to why this is often the case.  In short, the component of the efficient influence curve in the tangent space of mean 0 functions of $W$ \parencite{Vaart:2000aa} is given by $D^*_W(P) =  (b(P)(W) - E_P b(P)(W))^2$ where $b(P)(W) = E_P[Y \mid A = 1, W] - E_P[Y \mid A = 0, W]$.  For any approximation to this function, its empirical mean will automatically be zero. We denote the following to avoid heavy notation:
\[
\bar{Q}^k_{B_n} = \hat{Q}(P_{n,B_n}^0)(\overset{\rightarrow}{\epsilon_n}^{k})
\]
is the approximation of the outcome model at the kth iteration.  This fit is entirely dependent on the training set $P_{n,B_n}^0$ observations and the fluctuations to the model, performed on the corresponding validation set. 
\[
\bar{Q}^k_{1,B_n} = \hat{Q}(P_n)(\overset{\rightarrow}{\epsilon_n}^{k})
\]
is the approximation of the outcome model at the kth iteration.  We will see it actually depends on $P_{n,B_n}^0$ and $P_{n,B_n}^1 \hat{B}(P_{n,B_n}^0)(\overset{\rightarrow}{\epsilon_n}^{k})$, and hence the entire empirical draw of the data. 

\begin{eqnarray*}
\hat{b}^k_{B_n}(W) &=& \bar{Q}^k_{B_n}(1,W) - \bar{Q}^k_{B_n}(0,W)\\
\hat{b}^k_{1,B_n}(W) &=& \bar{Q}^k_{1,B_n}(1,W) - \bar{Q}^k_{1,B_n}(0,W)\\
\hat{g}_{B_n}(A \mid W) &=& \hat{g}(P_{n,B_n}^0)(A \mid W) 
\end{eqnarray*}

\begin{itemize}
\item
STEP 1: Initial estimates

For each split, $B_{n}$ as in standard 10-fold cross-validation, we use an ensemble learning package such as sl3 \parencite{sl3} or SuperLearner \parencite{Eric-Polley:2017aa} to fit a model on the training set, denoting the model as $P_{n, B_{n}}^0$.  In this case we will fit relevant factors of the likelihood, such as the propensity score and outcome model, but not the distribution of covariates, $W$.  For those, we use the empirical distribution as an unbiased estimator and will not target it.   Denote the initial fit of the $E_P[Y \mid A, W]$, which we denote $\bar{Q}^0_{B_n}$.  For both procedures the initial fits are all the same.  

\item
STEP 2: Check Tolerance

For each fold evaluate the so-called clever covariate: 

$H_{B_n}^k(A,W) = 2(\hat{b}^k_{B_n}(W) - P_{n,B_n}^1 \hat{b}^k_{B_n}) \frac{2A - 1}{\hat{g}_{B_n}(A \mid W)}$

and the influence curve approximation

 \[
D^*_{k, B_n}(O) = H_{B_n}^k(A,W)(Y -\bar{Q}^k_{B_n}(A,W))
\]

Our proposed procedure would do

$H_{1,B_n}^k(A,W) = 2(\hat{b}^k_{1,B_n}(W) - E_{B_n}P_{n,B_n}^1 \hat{b}^k_{1,B_n}) \frac{2A - 1}{\hat{g}_{B_n}(A \mid W)} $

and the alternate influence curve approximation

 \[
D_{k, B_n}(O)= H_{1,B_n}^k(A,W)(Y -\bar{Q}^k_{1,B_n}(A,W))
\]

Thus, in our procedure we need not keep track of the folds since the average within the clever covariate is merely taken over the entire sample.  Thus the process is identical to a TMLE once the initial estimates are made.  We just stack them on top of eachother and act is if it is all one initial fit as with the regular TMLE. \\

We then compute the influence curve approximation for each fold and take the sample mean.  Since the $T_W$ component, as stated above always has empirical average 0, we only need to take the mean of the component of the influence curve approximation in the tangent space, $T_Y =$ mean 0 functions of $Y \mid A, W$, which have finite variance \parencite{Vaart:2000aa}.  We then check if the mean of the influence curve is below the tolerance level, $\hat{\sigma}/n$ where $\hat{\sigma}$ is the sample standard deviation of the above influence curve computations.  This assures we stop the process when the bias is second order as any more fluctuations beyond that point are not helpful.  If we are below the tolerance we go to step 4.  Otherwise we continue onward.  
 
 \item
STEP 3: Targeting Step: Run a pooled logistic regression over all the folds with model: 

\[
Y = expit(logit \left(\bar{Q}^k_{B_n}(A,W) + \epsilon_n^k H(\bar{Q}^k_{B_n}(A \mid W) \right)
\]  

That is, a model which suppresses the intercept and uses and the initial predictions as the offset.  This is identical to our method, except we would use the slightly different clever covariate as stated above.  

Update all the predictions to form $\bar{Q}^{k+1}_{B_n}(A,W)$ or, as with our method $\bar{Q}^{k+1}_{1,B_n}(A,W)$.  

\item
STEP 4: Compute the estimate and CI:

\[
\Psi^{k_n}(P_n) = E_{B_n} \Psi \left(\hat{Q}(P_{n,B_n}^0)(\overset{\rightarrow}{\epsilon_n}^{k_n})\right)
\]

and estimate the standard error via the standard deviation of the influence curve in step 3 divided by root n, which we will just call $\hat{\sigma}/\sqrt{n}$ and form the confidence bands

\[
\Psi^{k_n}(P_n) \pm z_\alpha \hat{\sigma}/\sqrt{n}
\]

where $z_\alpha$ is the $1-\alpha/2$ normal quantile.  This entails computing the parameter separately per validation set before averaging the 10 estimates, i.e., compute the sample variance over the validation set for  $\hat{b}^k_{B_n}$, getting 10 estimates and then average them.  In our procedure we just have a list of n values of $\hat{b}^k_{1,B_n}$ and compute the sample variance over the entire sample.  

\end{itemize}

Thus we can see our procedure simplifies the targeting and, like the original formulation, solves the efficient influence curve equation, i.e. $E_{B_n}P_{n,B_n}^1 \hat{b}^k_{B_n}D^*_k(O)$ and $E_{B_n} P_{n,B_n}^1 \hat{b}^k_{1,B_n}D_k(O) \approx 0$, leading to the second order expansion as given in section 2.2. 

\subsection{Donsker Condition}
In the original formulation of the CV-TMLE, we view the estimator as 10 plug-in estimators.  To compute each of the 10 estimators, the targeting step is performed on the validation set.  Since we can therefore condition on the training set from which the initial estimate is formed, we essentially have a fixed functions $\bar{Q}^0_{B_n}$ and $\hat{g}_{B_n}$, which we are fluctuating on the validation set with a one-dimensional parametric submodel.  Thus the entropy is very low for the class of functions containing $\bar{Q}^k_{B_n}$ in our above algorithm.  With our procedure the entropy is a little bigger in that the function, $\bar{Q}^k_{1,B_n}$, can be viewed as fixed, yet depending on an average over all validation sets (therefore very slightly inbred before the targeting step) as well as the fluctuation parameter, $\epsilon$, determined by the validation set.  The influence curve approximation, $D_{k, B_n}$, defined above, will thus have similarly low entropy as if we allowed another parameter in the parametric submodel.\\

Consider the following, which we pull out of Zheng and van der Laan, 2010, for the convenience of the reader.  

\begin{defn}
For a class of function, $\mathcal{F}$, whose elements are functions, $f$, that map observed data, $O$, to a real number, we define the entropy integral:
\[
Entro(\mathcal{F}) = \int_0^{\infty} \sqrt{\underset{Q}{\text{log sup}}N\left(\epsilon, \Vert F \Vert_{Q,2}, \mathcal{F}, L^2(Q) \right) d\epsilon}
\]
where $N\left(\epsilon, \mathcal{F}, L^2(Q) \right)$ is the covering number for $\mathcal{F}$, defined by the minimum number of balls of radius $\epsilon$ under the $L^2(Q)$ norm to cover $\mathcal{F}$.  $F$ is defined as the envelope of $\mathcal{F}$ or a function such that $\vert f \vert \leq F$ for all $f \in \mathcal{F}$.  
\end{defn}

Consider the following lemma (lemma 2.14.1 in ref van der Vaart and Wellner, 1996) \parencite{Vaart:1996aa}

\begin{lem}
Let $\mathcal{F}$ denote a class of measurable functions of $O$.  Let $G_n = \sqrt{n}(P_n - P_0)$.  Then 

\[
E(sup_{f\in \mathcal{F}} G_n f) \leq Entro(\mathcal{F}) \sqrt{P_0 F^2}
\]
\end{lem}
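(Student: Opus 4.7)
The plan is to prove this by the standard empirical-process machinery: symmetrization, sub-Gaussian control of the Rademacher process conditional on the data, and Dudley's chaining.

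First, I would apply the symmetrization lemma. Letting $\epsilon_1,\ldots,\epsilon_n$ be i.i.d. Rademacher variables independent of $O_1,\ldots,O_n$, one obtains
\[
E\sup_{f\in\mathcal{F}} G_n f \;\leq\; 2\,E\sup_{f\in\mathcal{F}}\Bigl|\tfrac{1}{\sqrt{n}}\sum_{i=1}^n \epsilon_i f(O_i)\Bigr|.
\]
This reduces the problem to bounding a symmetric Rademacher process, which is much easier to control than the original centered empirical process.

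Second, I would condition on $O_1,\ldots,O_n$ and work with the empirical measure $P_n$. By Hoeffding's inequality, the conditional Rademacher process $f\mapsto \tfrac{1}{\sqrt{n}}\sum_i \epsilon_i f(O_i)$ is sub-Gaussian with respect to the random semi-metric $d_n(f,g) = \|f-g\|_{L^2(P_n)}$. Its diameter is bounded by $\|F\|_{L^2(P_n)} = \sqrt{P_n F^2}$, and an $\epsilon\|F\|_{L^2(P_n)}$-cover of $\mathcal{F}$ in $L^2(P_n)$ has cardinality at most $\sup_Q N(\epsilon\|F\|_{Q,2},\mathcal{F},L^2(Q))$.

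Third, I would apply Dudley's entropy integral (the classical chaining argument for sub-Gaussian processes) conditionally on the data. This yields
\[
E_\epsilon \sup_{f\in\mathcal{F}}\Bigl|\tfrac{1}{\sqrt{n}}\sum_i\epsilon_i f(O_i)\Bigr| \;\lesssim\; \int_0^\infty \sqrt{\log \sup_Q N(\epsilon\|F\|_{Q,2},\mathcal{F},L^2(Q))}\,d\epsilon \,\cdot\, \sqrt{P_n F^2}
\;=\; \mathrm{Entro}(\mathcal{F})\,\sqrt{P_n F^2}.
\]
Finally, taking expectation over $O_1,\ldots,O_n$ and using Jensen's inequality ($E\sqrt{P_n F^2}\leq \sqrt{E P_n F^2}=\sqrt{P_0 F^2}$) produces the desired bound, absorbing the constant from symmetrization into the entropy integral (or tracking it explicitly, as in van der Vaart and Wellner).

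The main obstacle is the chaining step: one must build a sequence of successively finer nets in $L^2(P_n)$, bound the maximum of sub-Gaussian increments across each level via a union bound, and sum the resulting geometric series to recognize the result as a Riemann sum approximation to the entropy integral. The subtle point is that the covering numbers in the statement are taken as a supremum over all finitely discrete measures $Q$, which is precisely what allows one to replace the data-dependent $L^2(P_n)$ covering numbers by a deterministic bound before integrating out the data.
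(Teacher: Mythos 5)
The paper does not prove this lemma at all: it is imported verbatim as Lemma 2.14.1 of van der Vaart and Wellner (1996), and your argument --- symmetrization, conditional sub-Gaussianity of the Rademacher process in $L^2(P_n)$, Dudley's chaining bounded via the supremum over discrete measures $Q$, then Jensen to pass from $\sqrt{P_n F^2}$ to $\sqrt{P_0 F^2}$ --- is exactly the standard proof of that cited result. Your approach is correct; the only caveat worth recording is the one you already note, namely that the bound genuinely holds only up to a universal multiplicative constant ($\lesssim$ rather than $\leq$), a constant the paper's statement silently suppresses.
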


This lemma then yields the following results in Zheng and van der Laan, 2010.  Consider $\overset{\rightarrow k_n}{\epsilon_n}$, a sequence of $\epsilon_n^1,...,\epsilon^{k_0}_n$ that are the fluctuation parameters dependent on the draw from the data.  In the lemma below we assume the $k_0$ steps of a parametric fluctuation parameters converge in probability to a sequence of length $k_0$, a very weak assumption, the same as the estimated parameters of a parametric model converging to the truth in probability. 
NOTE: for the one-step TMLE \parencite{Laan:2015ab} $k_n = k_0 = 1$ so the notation simplifies a bit.   

\begin{lem}
Suppose $\Vert \overset{\rightarrow k_n}{\epsilon}- \overset{\rightarrow k_0}{\epsilon}\Vert \overset{P}{\rightarrow}0$.  For each sample split of $B_n$, we consider a class of measurable functions of $O$:
\[
\mathcal{F}\left( P_{n,B_n}^0 \right) = \left\{f_{\overset{\rightarrow}{\epsilon}} \left( P_{n,B_n}^0 \right)= f\left( \overset{\rightarrow}{\epsilon}, P_{n,B_n}^0 \right) - f\left( \overset{\rightarrow}{\epsilon_0} P_{n,B_n}^0 \right):\overset{\rightarrow}{\epsilon} \right\}
\]
where the index set contains $\epsilon_n$ with probability tending to 1.  For a deterministic sequence $\delta_n\rightarrow 0$, define subclasses

\[
\mathcal{F}_{\delta_n}\left( P_{n,B_n}^0 \right) = \left\{f_{\overset{\rightarrow}{\epsilon}} \in \mathcal{F}\left( P_{n,B_n}^0 \right) : \Vert \overset{\rightarrow}{\epsilon} - \overset{\rightarrow}{\epsilon_0} \Vert < \delta_n \right\}
\]
If for deterministic sequence $\delta_n\rightarrow 0 $ we have
\[
E\left\{ Entro(\mathcal{F}_{\delta_n}\left( P_{n,B_n}^0 \right)) \sqrt{P_0 F(\delta_n, P_{n,B_n}^0)^2} \right\} \rightarrow 0 \text{ as } n\rightarrow 0
\]
where $F(\delta_n, P_{n,B_n}^0)$ is the envelope of $\mathcal{F}_{\delta_n}\left( P_{n,B_n}^0 \right)$, then
\[
\sqrt{n}(P_{n,B_n}^1 - P_0) \left\{f(\overset{\rightarrow}{\epsilon_n}, P_{n,B_n}^0) - f(\overset{\rightarrow}{\epsilon_0}, P_0) \right\} = o_P(1)
\]
\end{lem}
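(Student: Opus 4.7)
The plan is to reduce the statement to a stochastic equicontinuity bound and apply Lemma 2.14.1 of van der Vaart and Wellner, 1996, conditionally on the training sample. Because the validation observations $P_{n,B_n}^1$ are independent of the training observations $P_{n,B_n}^0$, conditioning on $P_{n,B_n}^0$ freezes $\mathcal{F}_{\delta_n}(P_{n,B_n}^0)$ into a deterministic class of measurable functions, and $\sqrt{n}(P_{n,B_n}^1 - P_0)$ then acts on it as an ordinary empirical process.

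First, I would split the integrand as
\[
f(\overset{\rightarrow}{\epsilon_n}, P_{n,B_n}^0) - f(\overset{\rightarrow}{\epsilon_0}, P_0) = \bigl[f(\overset{\rightarrow}{\epsilon_n}, P_{n,B_n}^0) - f(\overset{\rightarrow}{\epsilon_0}, P_{n,B_n}^0)\bigr] + \bigl[f(\overset{\rightarrow}{\epsilon_0}, P_{n,B_n}^0) - f(\overset{\rightarrow}{\epsilon_0}, P_0)\bigr].
\]
The first bracket is, by definition, an element $f_{\overset{\rightarrow}{\epsilon_n}}(P_{n,B_n}^0)$ of the class $\mathcal{F}(P_{n,B_n}^0)$. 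The second bracket is a fixed function once we condition on $P_{n,B_n}^0$ whose $L^2(P_0)$ norm tends to zero by consistency of the initial fit, so a direct second-moment bound on $\sqrt{n}(P_{n,B_n}^1 - P_0)$ applied to it yields an $o_P(1)$ contribution that is disposed of separately from the main term.

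Next, using the assumption $\Vert \overset{\rightarrow k_n}{\epsilon} - \overset{\rightarrow k_0}{\epsilon}\Vert \overset{P}{\rightarrow} 0$, I would choose the deterministic sequence $\delta_n \to 0$ to shrink slowly enough that $P\bigl(\Vert \overset{\rightarrow k_n}{\epsilon} - \overset{\rightarrow k_0}{\epsilon}\Vert < \delta_n\bigr) \to 1$. On the event $E_n$ on which this holds, the random function $f_{\overset{\rightarrow}{\epsilon_n}}(P_{n,B_n}^0)$ lies in $\mathcal{F}_{\delta_n}(P_{n,B_n}^0)$, so
\[
\bigl\vert\sqrt{n}(P_{n,B_n}^1 - P_0)\, f_{\overset{\rightarrow}{\epsilon_n}}(P_{n,B_n}^0)\bigr\vert \leq \sup_{f \in \mathcal{F}_{\delta_n}(P_{n,B_n}^0)} \bigl\vert\sqrt{n}(P_{n,B_n}^1 - P_0)\, f\bigr\vert.
\]
Conditioning on $P_{n,B_n}^0$ and invoking Lemma 2.14.1 of van der Vaart and Wellner, 1996, bounds the conditional expectation of the right-hand supremum by $Entro\bigl(\mathcal{F}_{\delta_n}(P_{n,B_n}^0)\bigr)\sqrt{P_0\, F(\delta_n, P_{n,B_n}^0)^2}$. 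Taking unconditional expectations and invoking the displayed hypothesis of the lemma, this vanishes, so by Markov's inequality and $P(E_n) \to 1$ the left-hand side of the claim is $o_P(1)$, completing the argument once combined with the easy contribution of the second bracket.

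The main obstacle is the localization in the third paragraph: converting the random radius $\Vert \overset{\rightarrow}{\epsilon_n} - \overset{\rightarrow}{\epsilon_0}\Vert$ into a deterministic radius $\delta_n$ that is still small enough to drive the entropy-envelope product to zero. This is the customary argument in empirical-process-based TMLE asymptotics, handled by exploiting the monotonicity in $\delta$ of the entropy integral together with Theorem 2.14.2 of van der Vaart and Wellner, 1996; once it is carried out, the remaining manipulations are routine applications of the maximal inequality and second-moment bounds on empirical averages over the validation fold.
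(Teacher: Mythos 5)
Your proposal is correct and follows essentially the same route as the paper, which offers no proof of its own beyond citing Zheng and van der Laan (2010) and noting that the maximal inequality of the preceding lemma (van der Vaart and Wellner, Lemma 2.14.1) yields the result; your argument of conditioning on the training sample so that $\mathcal{F}_{\delta_n}(P_{n,B_n}^0)$ becomes a fixed class, localizing via a deterministic radius $\delta_n$ chosen so that $\Vert \overset{\rightarrow}{\epsilon_n}-\overset{\rightarrow}{\epsilon_0}\Vert<\delta_n$ with probability tending to one, and applying the entropy--envelope bound is exactly that derivation. The one point worth flagging is that your handling of the second bracket invokes $L^{2}(P_0)$-consistency of $f(\overset{\rightarrow}{\epsilon_0}, P_{n,B_n}^0)$ toward $f(\overset{\rightarrow}{\epsilon_0}, P_0)$, a hypothesis the lemma as restated here omits but which is genuinely needed for the conclusion as written (the stated entropy condition only controls differences taken at the common training-set fit $P_{n,B_n}^0$), so your explicit acknowledgment of it is a correction rather than a defect.
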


We note to the reader that we keep lemma 3.2 identical to what was in Zheng and van der Laan, 2010, except we do not condition solely on $P_{n,B_n}^0$ when defining $\mathcal{F}\left( P_{n,B_n}^0 \right)$.  Such does not at all affect the truth of the lemma. 

\subsubsection{Remainder Term}

Our estimate minus the truth is, using notation in Zheng and van der Laan, 2010, where $\overset{\rightarrow}{k_n},$ indicates the $k_n$ iteration, we have

\[
\Psi^{k_n}(P_n) = E_{B_n} \hat{\Psi}_{B_n} (P_{n})
\]

The second order remainder, $R_2(\cdot)$, can be written:

\begin{eqnarray*}
\Psi^{k_n}(P_n) - \Psi(P_0) &=& E_{B_n}(P_{n,B_n}^1 - P_0)D_{\overset{\rightarrow}{k_n}, B_n} + R_2(P_n, P_0)\\
&=& -E_{B_n}P_0 D_{\overset{\rightarrow}{k_n}, B_n} + R_2(P_n, P_0)
\end{eqnarray*}

Assuming the remainder is $o_P(1/\sqrt{n})$, we then get that  
\[
\Psi^{k_n}(P_n) - \Psi(P_0) = E_{B_n}(P_{n,B_n}^1 - P_0 )D_{\overset{\rightarrow}{k_n}, B_n} + o_P(1\sqrt{n}) =-E_{B_n}P_0 D_{\overset{\rightarrow}{k_n}, B_n} + R_2(P_n, P_0)
\]

since our procedure solves $E_{B_n} P_{n,B_n}^1 D_{\overset{\rightarrow}{k_n}, B_n} = 0$.  As discussed, we can quite easily satisfy lemma 3.2 for the function class containing $D_{\overset{\rightarrow}{k_n}, B_n}$.  Again, assuming the remainder is $o_P(1/\sqrt{n})$ our estimator is asymptotically efficient if $D_{\overset{\rightarrow}{k_n}, B_n}$ converges to the true influence curve in $L^2(P_0)$ \parencite{Laan:2006aa}.  For blip variance the remainder term conditions are no more strict than for the original formulation of the CV-TMLE. 

\subsection{Conclusion}
This slight adjustment to the CV-TMLE algorithm is easier to implement and retains the same theoretical properties, as shown in our example here.  It remains to be more formally generalized to include a class of TMLE's for which it is valid but the example used here-in gives the reader sufficient intuition to understand when such can be done.  For one, it is obvious if any polynomial factor of a mean (assuming the mean converges) appears as a factor in the clever covariate, then the entropy will be similarly small, so this procedure covers many examples one might find in practice.  The procedure overlaps exactly with the originally formulated CV-TMLE with many common parameters where the clever covariates contain no empirical means.  It is a subject for future research whether this procedure has any advantages in finite samples, such as in the case of simultaneously estimating the ATE, which is then used as the centering in the VTE computation.  Such appears to be perhaps more sensible but simulations have shown no appreciable difference in performance for VTE.

\newpage
\printbibliography
\clearpage

\end{document}